\DeclareMathAlphabet{\mathcal}{OMS}{cmsy}{m}{n}
\newcommand{\@chapapp}{\relax}%
\newcommand\sysname{\textsf{MicroCash}\xspace}
\newcommand\besc{$B_{\emph{escrow}}$\xspace}
\newcommand\bpen{$B_{\emph{penalty}}$\xspace}
\newcommand\bescind{$B_{\emph{escrow}}^{\emph{ind}}$\xspace}
\newcommand\bpenind{$B_{\emph{penalty}}^{\emph{ind}}$\xspace}
\newcommand\trefund{$t_{\emph{refund}}$\xspace}
\newcommand\tissue{$t_{\emph{issue}}$\xspace}
\newcommand\tdraw{$t_{\emph{draw}}$\xspace}
\newcommand\texpire{$t_{\emph{expire}}$\xspace}
\newcommand\ddraw{$d_{\emph{draw}}$\xspace}
\newcommand\dredeem{$d_{\emph{redeem}}$\xspace}
\newcommand\tktrate{$tkt_{\emph{rate}}$\xspace}
\newcommand\lesc{$l_{\emph{esc}}$\xspace}
\newcommand\drawlen{$\emph{draw}_{\emph{len}}$\xspace}
\newcommand{\ignore}[1]{}
\begin{document}
%
\title{\Large \bf\sysname: Practical Concurrent Processing of Micropayments}

\author{Ghada Almashaqbeh\inst{1}\thanks{Most work done while at Columbia supported by NSF CCF-1423306.} \and
Allison Bishop\inst{2,3}\thanks{Supported by NSF CCF-1423306 and NSF CNS-1552932.} \and
Justin Cappos\inst{4}}
\authorrunning{G. Almashaqbeh et al.}
%
\institute{CacheCash Development Company, NY, USA
\email{ghada@cachecash.com}\\ 
\and 
Columbia University, NY, USA
\email{allison@cs.columbia.edu}\\ 
\and 
Proof Trading, NY, USA\\
\and
New York University, NY, USA
\email{jcappos@nyu.edu}}

\maketitle

\begin{abstract}
Micropayments are increasingly being adopted by a large number of applications. However, 
processing micropayments      
individually can be expensive, with transaction fees exceeding  
the payment value itself. By aggregating these small transactions into a few larger ones, 
and using cryptocurrencies, today's decentralized probabilistic micropayment schemes can 
reduce these fees. Unfortunately, existing solutions force micropayments to be   
issued sequentially, thus to support fast issuance rates a customer needs to 
create a large number of escrows, which bloats the blockchain. 
Moreover, these schemes incur a large 
computation and bandwidth overhead, which limit their applicability in large-scale systems.

In this paper, we propose \sysname, the first decentralized  
probabilistic framework that supports  
concurrent micropayments. \sysname introduces a novel 
escrow setup that enables a customer to concurrently issue payment tickets  
at a fast rate using a \emph{single} escrow. \sysname is also cost effective   
because it allows for ticket exchange 
using only one round of communication, and it aggregates the micropayments 
using a lottery protocol that requires only secure hashing. 
Our experiments show that \sysname can process thousands of tickets   
per second, which is around 1.7-4.2x times the rate 
of a state-of-the-art sequential micropayment system. Moreover, \sysname 
supports any ticket issue rate over any period using only 
one escrow, while the sequential scheme 
would need more than 1000 escrows per second to permit high rates. 
This enables our system to further reduce  
transaction fees and data on the blockchain by $\sim50\%$.
\end{abstract}

\section{Introduction}
\label{intro}
Micropayments, or payments in pennies, are increasingly being used 
by applications as diverse as 
ad-free web surfing, online gaming, and peer-assisted 
service networks~\cite{Pass15}. This paradigm allows participants 
to exchange monetary incentives at a small 
scale, e.g., pay per minute in online games. Such a fine-grained   
payment process has several 
advantages, including a great deal of flexibility for customers who may 
stop a service at any time. In addition, it reduces the financial risks 
between mutually-distrusted participants, where there is no 
guarantee that a client will pay after being served, or  
that a server will deliver service when paid in advance.

However, processing these small payments individually can incur 
high transaction fees that exceed the few pennies received. 
For example, the average base cost of a debit or credit card 
transaction in the US is around  
21 to 24 cents, and 23 to 42 cents~\cite{debit-fee,credit-fee}, 
respectively. In cryptocurrencies such a fee could be even higher, e.g., above $\$$1 
in Bitcoin~\cite{bitcoin-fee}. 
Beside this financial drawback, handling micropayments 
individually can impose a huge workload on the system, 
and may explode the  
log needed for accountability purposes. Thus, there is a need 
for a payment aggregation mechanism that records fewer transactions 
with values that still compensate properly for the small payments 
received to date.

Probabilistic micropayment schemes have 
emerged as a solution that fits the criteria outlined 
above~\cite{Wheeler96,Rivest97,Micali02,Rivest04}. In these models, 
the amount of required payments is locked in an escrow and 
micropayments are issued as lottery tickets. Each ticket has a probability $p$ 
of winning a lottery, and when it wins, produces a transaction of $\beta$ currency 
units. This means that, on average, only one large transaction is processed 
out of a batch of $\frac{1}{p}$ tickets. Unfortunately, these   
early proposals rely on a trusted party
to audit the lottery and manage payments. Such a centralized approach 
may increase the 
deployment cost and limit the use of the payment service 
to systems of fully authenticated participants~\cite{Chiesa17}.

As cryptocurrencies evolved, a number of initiatives 
have attempted to convert these schemes 
to distributed ones~\cite{Pass15,Chiesa17}. This is done by replacing 
the trusted party with the miners, and utilizing the blockchain to 
provide public verifiability of system operation. Yet, these 
approaches have several drawbacks that may   
hinder their usage in large-scale systems.  First, they force a  
customer to issue micropayments sequentially using 
the same escrow. This means a new ticket cannot be issued 
until it is confirmed that the previous one  
did not win, which requires a merchant to report the lottery outcome 
back to the customer. To issue tickets at a fast rate under this structure, this customer needs to 
create a large number of escrows, which increases the amount of data 
on the blockchain. Second, these schemes rely on 
computationally-heavy cryptographic primitives~\cite{Pass15,Chiesa17}, and  
several rounds of communication to exchange payments~\cite{Chiesa17}, which 
incur a large overhead. 
Such performance issues reduce the potential benefits of micropayments.

This paper proposes a solution to these drawbacks by introducing \sysname, the 
first decentralized probabilistic framework that supports  
\emph{concurrent} micropayments. \sysname features a novel payment 
setup that allows a customer to issue micropayments 
in parallel and at a fast rate using a \emph{single} escrow that can pay many
winning tickets.  
This is achieved by having the customer specify the total number of tickets it 
may issue, and provide an escrow balance that covers all  
winning tickets under its payment setup.

\sysname is also cost effective because it introduces a lightweight non-interactive 
lottery protocol. This protocol requires only secure hashing and allows a payment exchange using   
only one round of communication without demanding the merchant to 
report anything to the customer. Furthermore, this protocol is the first to eliminate situations where 
all lottery tickets may win or lose the lottery. 
Although the probability of these events is very small, it may discourage 
customers from using the system since such a possibility, i.e., to pay much 
more than the expected payments, may impose a strong psychological 
obstacle~\cite{Micali02}. Moreover, accounting for the worst 
case when all, or almost all, tickets win requires 
a large escrow balance, which increases the collateral cost. Our 
protocol solves this problem by selecting  an \emph{exact} number of winning 
tickets each round (where a round is the time needed to mine a block on the 
blockchain). All tickets issued in 
the same round are tied to a \emph{lottery draw} value in a future block 
on the blockchain, which is used to determine the set of winning tickets through  
an iterative hashing process. The security of this protocol, and the 
whole system, is enforced using both cryptographic and financial techniques. The latter 
requires a customer to create a penalty escrow that is revoked upon cheating, with a 
lower bound derived using a game theoretic modeling of the system.

To evaluate its efficiency, we experimentally test \sysname's performance  
and compare it to MICROPAY~\cite{Pass15}, a 
state-of-the-art sequential micropayment scheme. Our results 
show that a modest merchant machine in \sysname is able to process  
2,240 - 10,500 ticket/sec, which is 
around 1.7-4.2x times the rate in MICROPAY, with $60\%$ reduction in 
the aggregated payment size. Furthermore, a modest customer machine in \sysname is 
able to concurrently issue more than 33,000 ticket/sec 
using \emph{one} escrow over any period, while MICROPAY requires the creation of 
more than 1000 escrows per second to 
support a comparable issue rate. This allows \sysname 
to reduce transaction fees and amount of data on the blockchain in a video 
delivery and online gaming applications by $\sim50\%$.

\section{Related Work}
\label{related-work}
To orient readers to the current state-of-the-art in  
probabilistic micropayments, in this section we review 
prior work done in this area. In addition, we review an alternative payment 
aggregation mechanism, called payment networks~\cite{Decker15,Poon15}, 
focusing on its limitations when used to handle micropayments. \\

\noindent{\bf Probabilistic Micropayments.}
The idea of probabilistic micropayments dates 
back to the seminal works of Wheeler~\cite{Wheeler96} and 
Rivest~\cite{Rivest97}. In these schemes, a customer and a merchant run  
the lottery on each ticket by using a simple coin tossing 
protocol. Thus, there is a chance than more, or less, tickets than 
expected may win. All of these schemes rely on a centralized 
bank to track and authorize payments. 
This imposes additional overhead on the users  
who have to establish business relationships 
with this bank. Also, it limits the use 
of the service to only fully authenticated users. Therefore, 
this centralization issue is viewed as the 
main reason for the limited adoption of such solutions~\cite{Chiesa17}.

Systems using cryptocurrency-based probabilistic micropayments 
have the potential to overcome both the cost and efficiency problems 
inherent in earlier schemes. To the best 
of our knowledge, only two such schemes have been proposed 
to date in the literature, MICROPAY~\cite{Pass15} and DAM~\cite{Chiesa17}.

MICROPAY translates the scheme of Rivest~\cite{Rivest97} into an 
implementation on top of a cryptocurrency system. 
Instead of using an authorized bank account, customers create escrows on 
the blockchain that they use to issue lottery tickets. For the 
lottery protocol, MICROPAY implements a similar interactive 
coin tossing protocol, and adds an alternative non-interactive 
version that reduces the communication complexity (a merchant still has 
to report the lottery result back to the customer). 
However, the latter is computationally-heavy since it requires public key  
cryptography-based operations and a non-interactive zero 
knowledge (NIZK) proof system. Moreover, MICROPAY only supports sequential 
micropayments as mentioned earlier. DAM shares similar  
constraints, but it adds the feature of preserving user privacy (not like 
MICROPAY that is public), where it extends Zerocash~\cite{Sasson14} 
primitives to implement anonymous micropayments.

We believe that the added blockchain transactions due to sequential payments, 
coupled with the high computation cost, point to the 
need for optimized approaches 
that support concurrent micropayments at a lower 
overhead. This need is the motivation behind building \sysname.\\

\vspace{-8pt}
\noindent{\bf Payment Channels and Networks.}
This payment paradigm was originally developed to handle 
micropayments in Bitcoin~\cite{bitcoinj}, where it relies on 
a similar concept of processing most of the small payments 
locally. Later on, it was geared to enhancing the scalability of 
cryptocurrencies~\cite{Hearn12,Decker15,Poon15,Miller17,Green16,Malavolta17}, 
where utilizing off-chain processing to reduce on-chain traffic helps increase the 
transaction throughput at a lower overhead.

A payment channel is a contract 
between a customer and a merchant tied to a shared escrow fund. 
The ownership of this fund is adjusted over time based on the 
off-chain transactions, or local payments,   
made to date. Only two transactions 
are logged on the blockchain per channel, the opening transaction and the 
closing one that expresses   
the latest state of the fund ownership.

In general, payment channels and networks suffer from the high collateral cost 
of setting up multiple escrows when constructing payment paths between 
transacting parties. 
These costs may indirectly push the network towards 
centralization~\cite{ct-article}. This is because only wealthy parties can 
afford multiple escrows to establish payment 
channels, and hence, most users will rely on these parties, 
or hubs, to relay the off-chain transactions. In addition, each hub on the 
path charges a fee to relay payments. With micropayments, such a setup 
would be infeasible because 
these fees could be much larger than the payments themselves. Probabilistic 
approaches, on the other hand, are more flexible 
in allowing several parties to be paid using the same escrow. 
And by doing so, they reduce the collateral cost and 
eliminate any fees when exchanging lottery tickets. 
Hence, distributed probabilistic micropayments provide a better 
solution for handling small payments in cryptocurrency systems.

\section{Threat Model}
\label{threat-model}
The reliance on  
off-chain transactions in distributed probabilistic micropayments creates   
the potential for various types of attacks. In this section, we outline 
a threat model that accounts for these attacks, which 
guided the design of \sysname. In developing this model, we make the following assumptions:
\begin{itemize}
\itemsep0em
\vspace{-3pt}
\item No trusted party exists. 

\item Participants are rational, meaning that they may follow the protocol 
without violation, or deviate from it, based on what will maximize their 
utility gain.

\item The underlying cryptocurrency scheme is secure in the sense 
that the majority of the mining power is honest. This means that  
the confirmed state of the blockchain contains only valid transactions, 
and that an attacker who tries to mutate or fork the blockchain will 
fail with overwhelming probability.

\item Hash functions are modeled as random oracles, and 
the hash values of the blocks on the blockchain are 
modeled as a uniform distribution.

\item Efficient adversaries cannot break the basic cryptographic 
building blocks (SHA256, digital signatures, etc.) with non-negligible 
probability. 

\item Communication between customers and merchants 
takes place over a channel that provides integrity, confidentiality, and 
authenticity, such as TLS/SSL.
\end{itemize}

We used ABC~\cite{Almashaqbeh19} to build a threat model for 
MicroCash (a detailed version of this threat 
model is available online~\cite{material}). During this process, we identified 
the assets to be protected in distributed probabilistic micropayments, which 
include the escrows, the 
lottery tickets, and the lottery protocol. Then, by analyzing the 
security requirements of these assets, we 
produced the broad threat categories in such systems. Our list includes the 
following:
\begin{itemize}
\itemsep0em
\vspace{-3pt}
\item {\bf Escrow overdraft:} A customer creates a 
payment escrow insufficient for honoring the winning lottery tickets,  
or creates a penalty deposit that does not cover the cheating punishment imposed 
by the miners. Such a threat could be the result of creating 
small balance escrows, or front running attacks 
in which a customer withdraws the escrows before paying.

\item {\bf Duplicate ticket issuance:} A customer issues 
lottery tickets with the same sequence number to several merchants. 
This leads to issuing more tickets than the escrow can 
cover, allowing the customer to obtain more 
service than it pays for. 

\item {\bf Invalid payments:} A malicious customer hands merchants  
lottery tickets that do not comply with its payment setup or with the 
system specifications. Because 
these tickets will be rejected by the miners if they win the lottery, the 
customer can avoid paying merchants.

\item {\bf Unused-escrow withholding:} An attacker prevents or delays 
a customer from withdrawing its unused escrows. For example, merchants may 
delay claiming their winning lottery tickets to keep the payment escrow on hold.

\item {\bf Lottery manipulation:} An attacker attempts to influence the outcome 
of the lottery draw, and hence, bias the payment process. 

\item {\bf Denial of service (DoS):} This is a large threat category that 
threatens any distributed system. This work focuses on attacks related 
to the payment process. For example, an attacker may 
monitor the network and drop all lottery tickets to 
disturb the service.  
\end{itemize}

Note that dealing with malicious merchants who collect lottery tickets and 
do not deliver a service is outside the scope of \sysname. The same is true for dealing 
with malicious customers who may obtain the service without paying. In this 
work, we are concerned 
with the payment scheme design, rather than how to exchange service 
for a payment, which is part of an application design.

In addition, \sysname does not address 
payment anonymity. Addressing this issue securely, 
while preserving the low overhead of \sysname, is a direction of our future 
work.

\section{\sysname Design}
\label{design}
Having outlined the security threats to probabilistic micropayments, and 
the limitations of existing solutions,  
this section presents the design of \sysname, a concurrent micropayment system
that addresses these issues. We start with an overview of the system, 
followed by a more detailed technical description.

\begin{figure}[h!]
\centerline{
\includegraphics[height= 1.4in, width = 0.55\columnwidth]{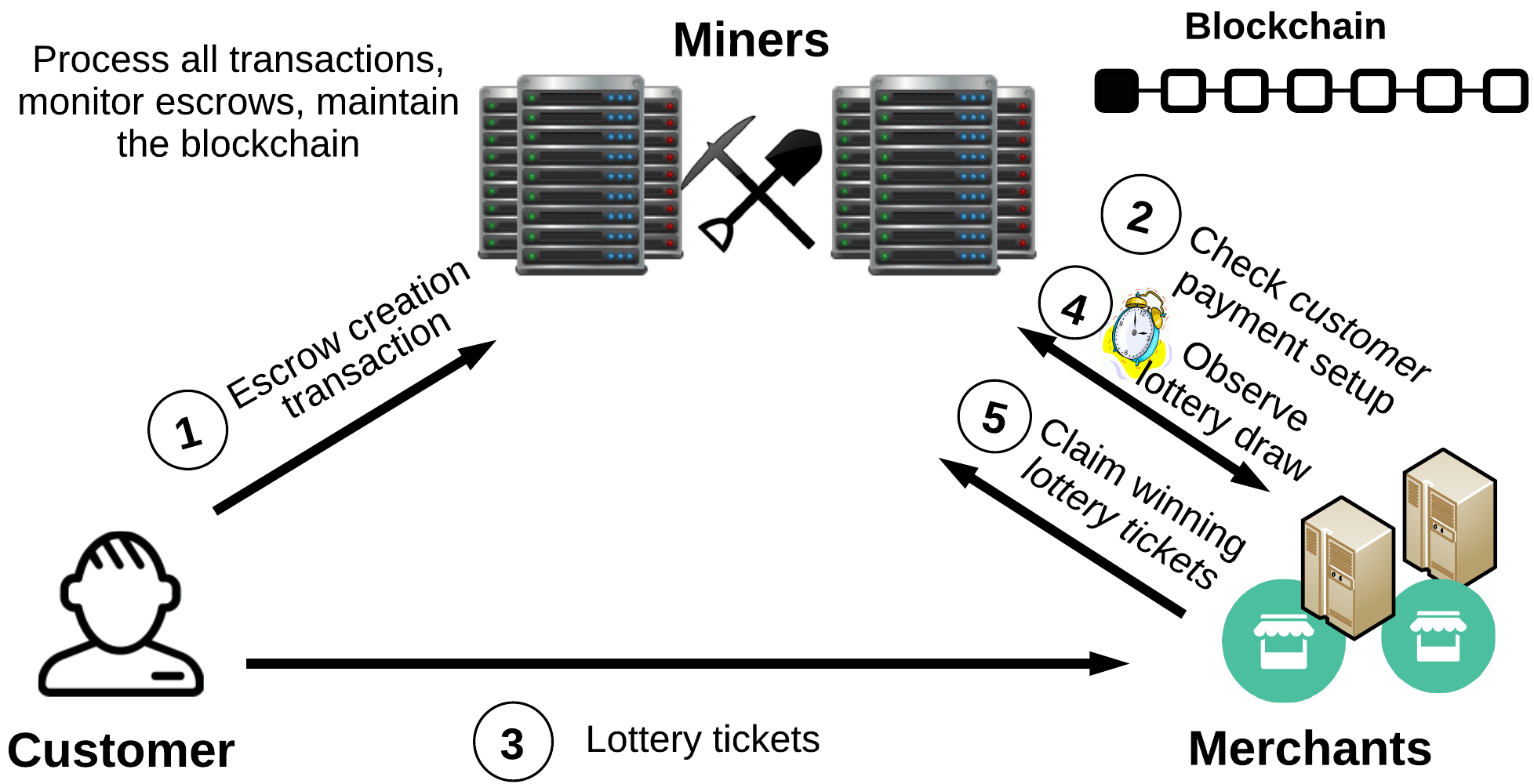}}
\vspace{-7pt}
\caption{Flow of operations in \sysname.} \label{microcash-flow}
\end{figure}

A high level illustration of \sysname, that also captures the remainder of this 
section's organization, is found in Figure~\ref{microcash-flow}. 
As shown, during the payment setup ({\bf Step 1}, Section~\ref{escrow-setup}), each 
customer issues a transaction creating two 
escrows: payment and penalty. The customer uses the former   
to make payments in the form of lottery tickets, while the 
miners use the latter to financially punish this customer if it cheats. 
Payment exchange starts as soon as  the  
escrow transaction is confirmed on the blockchain. At that time, 
merchants can check the escrow setup before 
transacting with the customer ({\bf Step 2}). In exchange for the delivered 
service, the customer issues these merchants lottery tickets 
according to a ticket issue schedule that limits the ticket issue rate  
over a set period ({\bf Step 3}, Section~\ref{pay-tkt}). To redeem these payments, a merchant 
keeps each of its tickets until the lottery draw time of this ticket. 
It then observes a value derived based on the block 
mined at that time to determine if this ticket won 
({\bf Step 4}, Section~\ref{lottery-protocol}). If it is a winning ticket, the merchant 
can claim currency 
from the customer's escrow during the ticket redemption period 
({\bf Step 5}, Section~\ref{claim-tkt}). 
This interaction continues until the end of the escrow lifetime. 
At that time, and when all issued tickets expire, the customer can spend any 
remaining funds.

\subsection{Escrow Setup}
\label{escrow-setup}
\sysname introduces a novel escrow setup that allows multiple winning tickets
to be redeemed, which enables both concurrent ticket issuance and reduces
the amount of on-chain escrow-related data. This setup provides techniques 
to determine the needed escrow balance to cover all concurrent 
tickets, and to track the issuance of these tickets 
in a distributed way.

\subsubsection{Escrow creation.}
As an off-chain payment scheme, 
\sysname must ensure that customers can and will pay.  
This includes honoring winning tickets, and, if caught cheating,  
complying with a stipulated financial punishment. 
To satisfy these requirements, each customer must create 
payment and penalty escrows with sufficient funds to cover 
both eventualities.

Given that each payment escrow must be tied to a penalty escrow,  
a customer sets up both using one creation transaction. 
This transaction provides funds to be locked under each 
escrow balance, where we refer to the payment and penalty escrow 
balances as \besc and \bpen, respectively. It also configures a set of 
parameters that influence how the value 
of both \besc and \bpen are computed, and 
how they are to be spent. These parameters, whose values are 
specified by the customer possibly after negotiating with the 
merchants, include the following:
\begin{itemize}
\itemsep0em
\item The lottery winning probability $p$.

\item The currency value of a winning lottery ticket $\beta$.

\item The ticket issue rate \tktrate, which is the maximum number  
of tickets a customer is allowed to hand out per round. This is used
to calculate which \emph{ticket sequence numbers} are valid within each ticket 
issuing round.

\item A lottery draw round length, denoted as \drawlen, such that  
$draw_{\emph{len}} \in \{1, \dots, c\}$ for some small system parameter 
$c$. The customer 
has to configure \drawlen, $p$, and \tktrate in a way that makes  
$p$\tktrate\drawlen of an integer value (this is the number 
of winning tickets in a lottery draw). 

\item The set of beneficiary merchants that can be paid using the escrow, 
where the size of this set is denoted as $m$.
\end{itemize}

Computing \besc and \bpen based on the above parameters 
proceeds as follows. To permit concurrent micropayments, \besc must be large 
enough to pay all winning tickets tied to an escrow. Given that each winning 
ticket has a value of $\beta$ currency units, and that there are $p$\tktrate\drawlen  
winning tickets per \drawlen rounds, \besc can be simply computed as follows (where 
\lesc is the escrow lifetime in rounds, and there are \lesc/\drawlen lottery draws):
\begin{equation}\label{payment-balance}
B_{\emph{escrow}} = \beta p tkt_{\emph{rate}} l_{\emph{esc}}
\end{equation}

For \bpen, we compute a lower 
bound for this deposit using an economic analysis that accounts for 
the additional utility gain a customer may obtain by cheating. This 
bound is given by the following equation:
\begin{footnotesize}
\begin{equation}\label{penalty}
B_{\emph{penalty}} > (m-1)p\beta tkt_{\emph{rate}} \emph{draw}_{\emph{len}} \bigg(\frac{1-p}{1-\rho^{-1}} + \emph{draw}_{\emph{len}}\Big((1-p)(d_{\emph{draw}} -1)+ d_{\emph{redeem}}\Big)\bigg)
\end{equation}
\end{footnotesize}

\noindent where \ddraw 
is the lottery draw period in rounds, \dredeem is the ticket  
redemption period in rounds, and $\rho = {a \choose b}$ 
such that $a = {tkt_{\emph{rate}} draw_{\emph{len}}}$ and $b = (1-p)a$ (more 
about these parameters 
in Section~\ref{lottery-protocol}). This lower bound 
ensures that the financial punishment  
exceeds the additional utility gain of cheating, and hence, makes cheating 
unprofitable compared to acting honestly. The 
full details of deriving this bound 
are found in Section~\ref{penalty-analysis}.

Verifying the correctness of a payment setup is performed by 
the miners upon receiving the escrow creation transaction. First, they verify 
that the customer owns the input funds.  
Then, the miners use \besc to compute the escrow lifetime as \lesc = 
$\frac{B_{\emph{escrow}}}{\beta p tkt_{\emph{rate}}}$. After that, 
they check that both \lesc and $p$\tktrate\drawlen  
are of integer values, \drawlen is within the allowed range, and that \lesc is  
multiples of \drawlen. Lastly, the 
miners verify that \bpen satisfies the bound given above. If all these 
checks pass, the miners add the escrow transaction to the blockchain. Otherwise,  
they reject the escrow by dropping its transaction.

\subsubsection{Escrow management.}
In \sysname, the escrow funds can be spent only for 
a restricted set of transactions. This set includes claiming winning tickets, 
presenting proofs-of-cheating, and (after the escrow lifetime is over) enabling a 
customer to spend its unused escrow funds.

To track the locked funds, miners maintain a state for each escrow 
in the system. This state includes the following:
\begin{itemize}
\itemsep0em
\item The ID of the escrow, which is a random value generated 
by the miner who adds the escrow creation transaction to the 
blockchain.

\item The balances of both the payment and penalty escrows.

\item The public key of the owner customer, which 
is used to verify all signed tickets 
that are issued using this escrow.

\item The values of $p$, $\beta$, \lesc, \tktrate, \drawlen, 
and the set of beneficiary merchants (both the public key of each 
merchant and a corresponding index).

\item An escrow refund time, denoted as \trefund, at which the 
customer can spend any remaining funds. Miners set this time to be equal to 
the expiry time of the 
tickets issued in the last round of an escrow lifetime.
\end{itemize}

Ticket issuance using an escrow must follow a schedule based upon the 
tickets’ sequence numbers. That is, if an escrow supports a rate of  
\tktrate tickets per round, then in the first round tickets with sequence 
numbers $0$ to \tktrate-1 may be issued. Then, in the second round 
tickets with sequence number range \tktrate to 
$2$\tktrate-1 can be issued, and so on until the last 
round of an escrow lifetime. Merchants will accept tickets in the current round 
with sequence numbers that follow this assignment schedule. 
In order to deal with the fact that customers and merchants may have 
inconsistent views of the blockchain, 
and hence, may not agree about what the current
round is (i.e., current height of the blockchain), merchants will also 
accept tickets from the prior and next round given that these tickets use the 
correct sequence number range.

An example of a ticket issuing schedule is found in Figure~\ref{ticket-issue-schedule}. 
As shown, the escrow creation 
transaction is published at round 10 and confirmed at round 16. This escrow 
has \lesc = 3 rounds, and allows a ticket issue rate of 1000 tickets per round. 
Thus, the customer has 3 ticket issuing rounds, starting at round 17, with 
the sequence number ranges shown in the figure.

\begin{figure}[t!]
\centerline{
\includegraphics[height= 1.3in, width = 0.8\columnwidth]{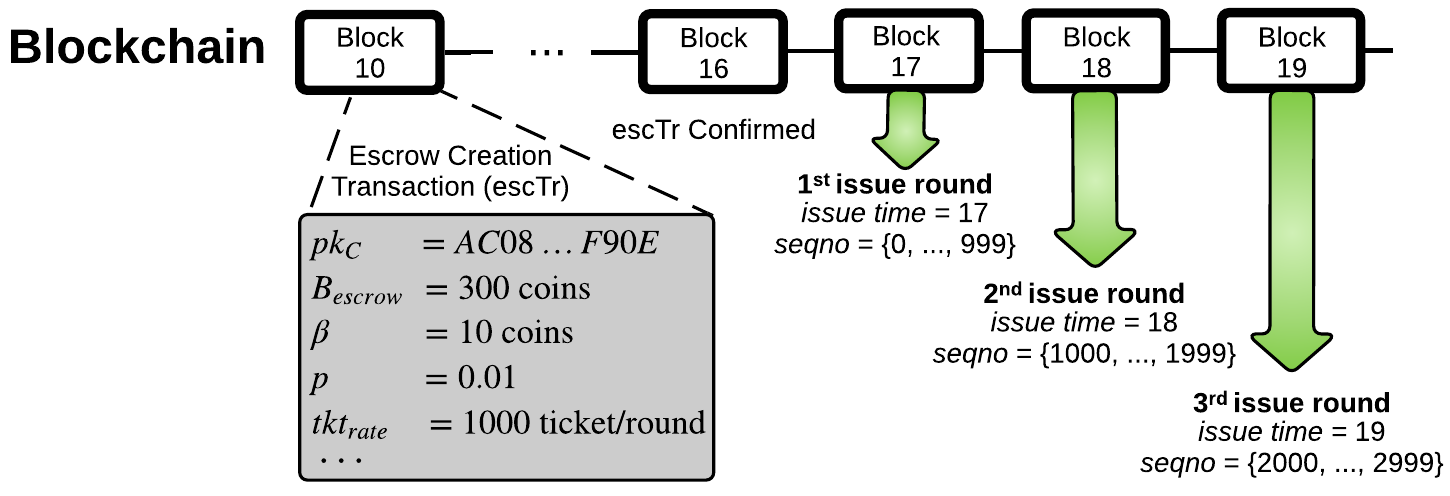}}
\vspace{-8pt}
\caption{An example of a ticket issuing schedule.}
\label{ticket-issue-schedule}
\end{figure}

The miners update the escrow state based on the 
escrow related transactions (mentioned earlier) they process.  
For example, redeeming a winning ticket reduces 
\besc by $\beta$ coins, and receiving a valid proof-of-cheating against the 
customer causes the miners to burn the funds in 
\bpen. All these transactions are logged on the 
blockchain, which permits anyone to validate the state.

The miners discard an escrow state once all tickets tied to this 
escrow expire, which happens at time \trefund, or when an escrow is broken 
after receiving a valid proof-of-cheating (proof-of-cheating 
is discussed in Section~\ref{proof-of-cheating}). At that time, the customer may 
spend the remaining funds of its payment escrow 
(if any) and its penalty deposit (if not revoked).

\subsection{Paying with Lottery Tickets}
\label{pay-tkt}
After the escrow is confirmed on the blockchain, a customer 
can start paying for service by giving merchants lottery tickets. A lottery 
ticket  $tkt_L$ is a structure containing several fields as follows: 
\begin{equation}
tkt_L = index_M||id_{\emph{esc}}||seqno||\sigma_C
\vspace{-5pt}
\end{equation}

\noindent where $index_M$ is the recipient merchant index 
as listed in the escrow state, $id_{\emph{esc}}$ is the 
escrow ID, $seqno$ is the ticket sequence 
number, and $\sigma_C$ is the customer's signature. The 
$seqno$ field, along with $id_{\emph{esc}}$, identifies a ticket, which 
also provides means for ticket tracking in the system.
Note there is no need to include any information about the escrow setup or 
the parties' public keys in the ticket 
itself. Merchants and miners can look these up on the blockchain using $id_{\emph{esc}}$.

When issuing a ticket, the customer fills in the above fields and signs the
ticket using the secret key tied to the public key the customer used 
when creating the escrow. The 
ticket $seqno$ can be any sequence number within the range assigned 
to the current ticket issue round. The 
customer can continue issuing lottery tickets, without waiting the lottery results  
of previously issued ones, until it 
finishes all sequence numbers in this range. After that, it must 
wait the next round to generate more tickets.

Upon receiving a ticket, a merchant verifies it as follows: 
\begin{itemize}
\itemsep0em
\item Check that the escrow is not broken
\item Check that its index $index_M$, that appears in the ticket, is 
identical to the one listed in the escrow state. 
\item Verify that $seqno$ is within the valid range based on the 
ticket issuing schedule. (As mentioned before, to handle inconsistencies 
in the blockchain view, tickets
from the previous or next issuance round can be accepted.)
\item Verify $\sigma_C$ over the ticket using the customer's public key.
\end{itemize}

If any of the above checks fails (except the fourth one), the merchant 
drops the ticket. On the other hand, if the ticket has 
an out-of-range sequence number (i.e., larger than the maximum sequence 
number allowed by the escrow), the recipient merchant can issue a proof-of-cheating that 
will cost the customer its penalty deposit. Otherwise, if all the above 
checks pass, the merchant accepts the ticket and keeps it until its 
lottery draw time.

\subsection{The Lottery Protocol}
\label{lottery-protocol}
\sysname introduces a lightweight  
lottery protocol that relies solely on secure hashing. This protocol 
does not require any interaction between the customer and the 
merchant. Instead, it utilizes only the state of the blockchain, where 
the lottery draw outcome is determined by a value derived from the block 
mined at the lottery draw time.

To specify the lottery draw time, \sysname defines 
two system parameters, \ddraw and \drawlen mentioned earlier. \ddraw   
represents the least number of rounds a ticket has to wait after its issue 
round (which we call \tissue) until it   
enters the lottery. \drawlen determines the number of consecutive 
ticket issuing rounds that will have all their lottery tickets enter the same 
lottery draw\footnote{Since \drawlen affects \tdraw of a ticket, \sysname 
specifies a small interval for its possible 
values to prevent a customer from excessively delaying paying merchants.}. 
Hence, if \drawlen = 1, then the draw time 
\tdraw of a ticket is computed as \tdraw = \tissue + \ddraw. On the other hand, 
if \drawlen $> 1$, then \tdraw of a ticket is \tdraw of the last ticket 
issuing round in the contiguous set of rounds.

A clarifying example of determining the lottery draw time is found in 
Figure~\ref{lottery-draw-time}. As shown, starting with the first ticket issuing 
round, which is 28 in the figure, each set of contiguous \drawlen rounds enter 
the lottery together. For example, all tickets issued in rounds 28, 29, and 30 enter 
the lottery at round 40, which is 10 rounds after the last ticket issue round in this set.

\begin{figure}[t!]
\centerline{
\includegraphics[height= 0.6in, width = 0.8\columnwidth]{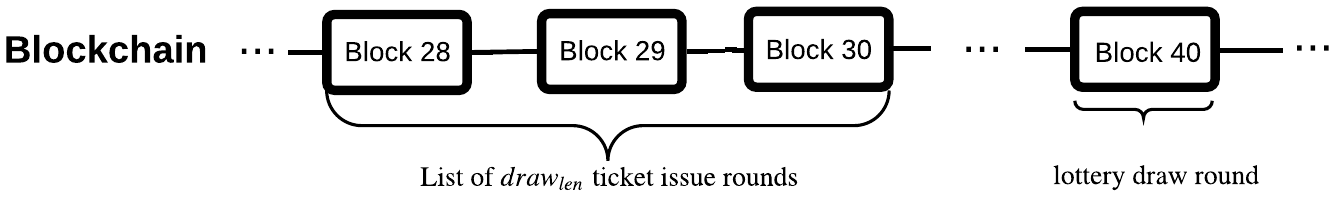}}
\vspace{-8pt}
\caption{An example of lottery draw time (\drawlen = 3, $p = 1/300$, \tktrate = 1000, and \ddraw = 10). } 
\label{lottery-draw-time}
\end{figure}

Accordingly, whether a ticket wins or loses depends on 
a lottery draw value tied to the block mined at time \tdraw. 
This value is computed using a simple verifiable delay function 
(VDF)~\cite{Boneh18} that is evaluated over this block. This evaluation takes 
a period of time, hence the name delay function, where this period is a 
system parameter. Consequently, when a miner mines the block at index \tdraw, 
it cannot tell immediately which ticket will win or lose. This miner first has to 
compute the VDF over this block.

We instantiate this VDF using iterative hashing,   
where the number of iterations is set to a value that delays 
producing the output 
by the period specified in the system. In addition, we let the miners compute this 
function as part of the mining process. 
That is, when a miner mines a new block, it evaluates the VDF   
over the previous block. Therefore, 
the VDF value of the block at index \tdraw appears 
on the blockchain when the block at index \tdraw + 1 is mined.

Accordingly, in our protocol a merchant keeps a ticket $tkt_L$ until its  
lottery draw time \tdraw. Then, after observing the VDF value of the block 
mined at that time, the miners, and any party in the system, can compute the 
set of winning sequence numbers for that round as 
follows. First, the hash of the VDF value along with the escrow ID is computed, which we 
call $h_1$, and then $h_1$ is mapped to a sequence number within the assigned range 
of the ticket issuing rounds tied to \tdraw. To obtain the second winning ticket 
sequence number, the hash of 
$h_1$ is computed to obtain $h_2$, and then $h_2$ is mapped to a ticket sequence 
number in the given range. If a collision occurs, i.e., a previously seen sequence 
number is produced, 
it is discarded and the process proceeds with hashing $h_2$ to obtain $h_3$, and so on. 
This continues until a set of distinct 
$p$\tktrate\drawlen winning sequence numbers is drawn\footnote{We design a  
version of this lottery protocol with independent ticket winning events 
in Appendix~\ref{lottery-protocol-ind}. This version can be used in case it is infeasible 
in some applications to configure $p$\tktrate\drawlen to be an integer.}.

The previous process is clarified by the example depicted in 
Figure~\ref{lottery-example}, which has the same setup as in 
Figure~\ref{lottery-draw-time}. As shown, and assuming that round 28 is the first 
ticket issuing round, the ticket has been issued at round 29, and hence, 
it entered the lottery at round 40. The VDF value of the block with index 
40 appears inside block 41. By using this value, a set of winning sequence numbers 
is chosen, based on which the ticket in the figure is a winning one because its 
sequence number is within this set.

\begin{figure}[t!]
\centerline{
\includegraphics[height= 1.7in, width = 0.9\columnwidth]{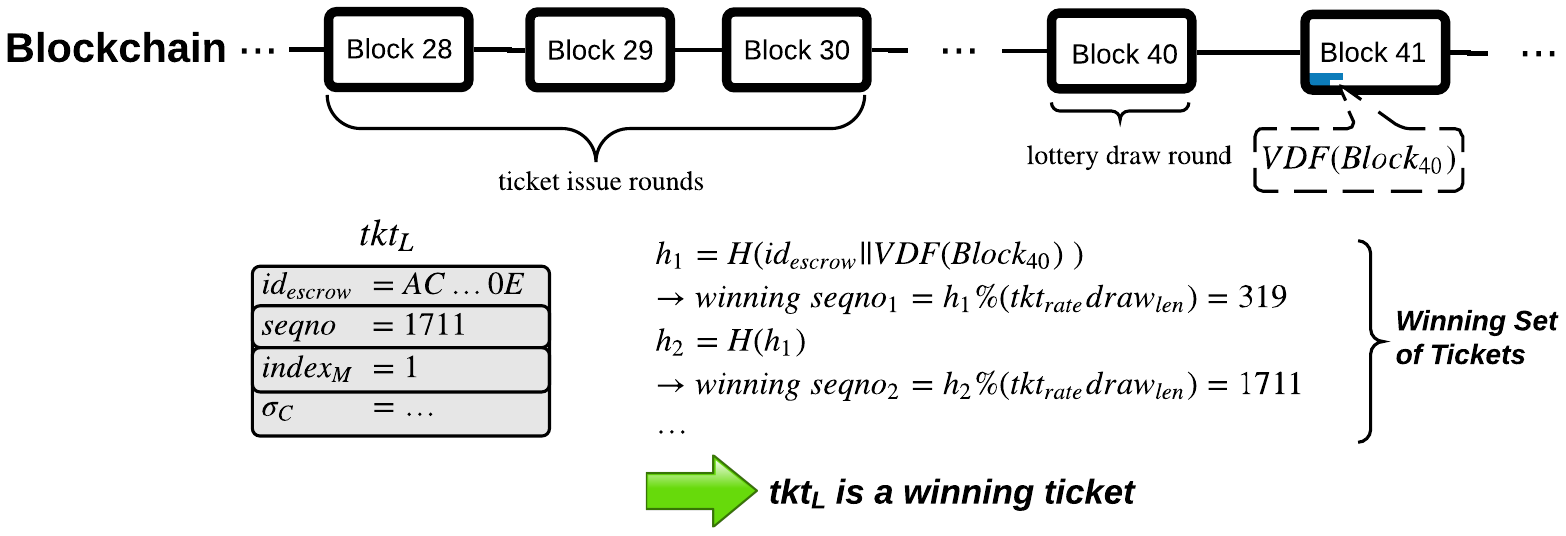}}
\vspace{-7pt}
\caption{Lottery draw example (same setup as in Figure~\ref{lottery-draw-time}). } 
\label{lottery-example}
\end{figure}

Note that the lottery draw involves only 
values that are part of the escrow state. In other words, it relies on  
parameters that the issuing customer cannot manipulate, which do not 
include the merchant recipient address. This  
means that a ticket's chance of winning the lottery is not affected by 
who owns it. In addition, this means that if a customer issues  
tickets with duplicated sequence numbers to multiple merchants, all these 
tickets will win or lose together. If the 
tickets win, detecting cheating is trivial because both merchants will publish
their winning tickets to the blockchain to redeem the tickets.

\subsection{Claiming Winning Tickets}
\label{claim-tkt}
After the lottery draw, a merchant can collect currency from the 
customer's escrow by redeeming its winning tickets (if any). This is 
done by issuing a redeem transaction that has the winning ticket 
as input, and has $\beta$ coins 
directed to the merchant's address as output.

To allow the miners to resolve tickets and release escrow funds back to the 
customer in a reasonable timeframe, \sysname specifies a redeem period for 
each ticket. This is done by defining a system parameter called \dredeem  
that determines the number of rounds 
during which a ticket can be redeemed. After this period, a ticket 
expires, which happens at time 
\texpire = \tdraw + \dredeem. Thus, \dredeem must be set to  
a value that allows merchants to redeem their winning tickets.

After receiving a redeem transaction, the miners process it as follows: 
\begin{itemize}
\itemsep0em
\item Check that the format of the transaction complies with the system 
specifications.

\item Verify the redeemed ticket as outlined in Section~\ref{pay-tkt}. 

\item Verify that the ticket is a winning one by checking that its sequence 
number is among the winning set tied to time \tdraw of this ticket.

\item Check that the ticket is not expired. 

\item Verify the merchant's signature over the redeem transaction using the 
public key corresponding to $index_M$ found in the escrow state. 
This is needed to 
prevent participants from redeeming tickets they do not own.

\item Check that no other ticket with the same sequence number and   
tied to the same escrow has already been redeemed. If it is, this is
a proof of duplicate ticket issuance and is used as
a proof-of-cheating against the customer. 
\end{itemize}

If all these checks pass, miners approve the redeem transaction 
and update the escrow state accordingly. Otherwise, they drop 
an invalid transaction and, if a proof-of-cheating is produced, 
revoke the customer's penalty deposit.

\subsection{Processing Proof-of-cheating}
\label{proof-of-cheating}
A proof-of-cheating is a special transaction that 
can be presented to the miners by any party who witnesses 
a cheating incident. In \sysname, such an incident could be issuing more tickets than what 
an escrow can cover, i.e., exceed the maximum $seqno$ an escrow may allow, 
or duplicate ticket issuance. A signed lottery ticket with an out-of-range sequence 
number or signed 
tickets with duplicated sequence numbers are publicly verifiable proofs against 
the issuing customer.

Upon verifying cheating, miners punish the   
customer by revoking the penalty escrow tied to  
its payment escrow referenced in the ticket as follows. In case of duplicate 
ticket issuance, the miners first pay all duplicated winning tickets 
from the payment escrow, if it is sufficient, and from the penalty deposit 
thereafter. Then, they publish an 
escrow break transaction containing the proof-of-cheating on the 
blockchain. This transaction burns the revoked penalty 
deposit rather than providing them to another party. This is done to 
mitigate the case that if the funds
are provided to another party, that party may have colluded with the
customer to receive those funds. Respecting the lower bound of \bpen, as specified 
by equation~\ref{penalty}, ensures that all the 
aforementioned cheating behaviors are less profitable than acting 
in an honest way.  
Hence, it makes such behaviors unappealing to rational customers.

\section{Computing a Lower Bound for \bpen}
\label{penalty-analysis}
\vspace{-4pt}
In this section, we compute a lower bound for the penalty  
deposit \bpen required to deter cheating. This is done using a 
game theoretic approach that quantifies the additional utility 
gain, or monetary profit, a malicious customer could accrue as 
compared to an honest one. By setting the penalty deposit to 
at least equal this additional utility, cheating becomes 
less profitable in expectation than acting honestly, and hence, 
becomes less unappealing to rational customers.

In what follows, we present this analysis including the 
malicious strategies addressed, the game setup, and a definition for the utility 
gain function. Finally, we state and prove a lower bound for \bpen. \\

\vspace{-4pt}
\noindent{\bf Covered malicious strategies.} 
In \sysname, a penalty escrow is revoked upon the detection of 
two types of malicious events: issuing duplicated tickets or issuing invalid 
payments. The utility gain of any of these malicious strategies 
depends on the length of the cheating detection period, i.e., the time needed 
to detect a cheating incident. Throughout this period, the cheating customer is 
still perceived as honest by merchants, and so can continue 
cheat and increase its utility gain. Consequently, the 
longer the detection period lasts, the greater the accumulated additional utility.

Given that merchants verify each ticket immediately when 
received, invalid payments are detected 
instantly. On the other hand, duplicated tickets are not detected  
until they are redeemed (if they win the lottery), which may happen  
after several rounds in \sysname. This means that the additional 
utility gain of ticket duplication will be larger than one obtained by 
issuing invalid payments. Therefore, setting the value of 
the penalty deposit based on the additional utility gain of the former covers the latter as well. 
For this reason, we consider only ticket duplication strategy 
in our analysis. \\

\noindent{\bf Game setup.}
We posit a single player game in which a malicious customer 
applies the ticket duplication strategy. This strategy is defined 
as duplicating a sequence number 
among two or more tickets, up to $m$ tickets, where 
$m$ is the number of beneficiary merchants of an escrow.

Based on the design of \sysname's lottery protocol, these 
duplicated tickets will either all win the 
lottery or all lose because the lottery draw 
does not depend on the ticket recipient address (see 
Section~\ref{lottery-protocol}). This means that duplication among 
fewer than $m$ merchants does not reduce the cheating detection 
probability. Therefore, a rational customer who decides to 
duplicate a specific ticket will 
always duplicate it among all $m$ merchants.

Moreover, under the fixed winning rate approach, a rational customer will 
not duplicate more than $(1- p)$\tktrate 
sequence numbers. This is because a number of $p$\tktrate winning tickets (or 
sequence numbers) will 
be selected at each round, and the rest of the tickets, i.e., $(1- p)$\tktrate tickets, will not win. 
Thus, duplicating more than $(1- p)$\tktrate sequence numbers guarantees that 
a duplicated ticket will win. Exceeding this number means that 
the cheating detection probability will be 1.

As mentioned previously, \sysname requires any customer to specify 
the set of merchants who are beneficiaries of its escrow in advance. 
This is needed to be able to bound the additional 
utility gain of malicious customers~\cite{Chiesa17}. 
If this set is not specified, the additional utility  
cannot be bounded because we would not know the maximum number of 
duplicated tickets that could be issued.

Since \sysname adds the parameter \drawlen that groups several 
rounds together when entering the lottery, we refer to a list of contiguous \drawlen rounds 
as a lottery round. Hence, the number of lottery rounds in an escrow 
lifetime is $\frac{l_{\emph{esc}}}{draw_{\emph{len}}}$, 
and in each lottery round a customer can issue up to \drawlen\tktrate tickets\footnote{In case of 
a non-integer number of lottery rounds, we take the ceiling. This makes our bound more 
conservative, and hence, provides stronger motivation for acting honestly.}. 
Based on these variables, the cheating detection period of all duplicated tickets issued 
in any lottery round in \sysname is \ddraw + \dredeem rounds (the latter is in terms of 
simple rounds, where a round is the time needed to mine a block on the blockchain). 
That is, a malicious customer will be 
detected when any of the duplicated tickets is presented to the 
miners\footnote{We do not assume that merchants exchange any 
information about tickets they received.}. This happens if  
duplicated tickets win the lottery and are then 
claimed by the merchants. Considering the worst case 
that this claim may take place in the last round 
of the redeem period, cheating will be detected after 
\ddraw + \dredeem rounds from a ticket issue time. At that time, the 
miners revoke the penalty escrow and 
the cheating customer leaves the system. Otherwise, if none 
of the duplicated tickets win, this customer stays and may 
continue cheating. To simplify the analysis, we will 
use lottery rounds to express the cheating detection period. That is, this period is 
expressed as (\ddraw + \dredeem)/\drawlen lottery rounds. To ease the discussion, 
we use round to refer to a lottery round in the rest of this section.

\begin{table}[t!]
\caption{Notations I.} 
\label{notations}
\centering 
\small{
\begin{tabular}{| p{0.1\columnwidth} | p{0.82\columnwidth} |}\hline\hline
{\bf Symbol} & {\bf Meaning}  \\ [0.5ex] \hline\hline

$\mathcal{C}$ & Honest customer.  \\[0.5ex] \hline
$\mathcal{\hat{C}}$ & Malicious customer.    \\ [0.5ex]  \hline   
$u(\cdot)$ & Utility gain function.    \\ [0.5ex]  \hline  
$\tau$ & The number of tickets that can be issued per lottery round, such that $\tau =$\drawlen\tktrate and $\tau \in \mathbb{N}$. \\ [0.5ex]  \hline
$d$ &  The lottery draw period measured in lottery rounds, such that $d = \frac{d_{\emph{draw}}}{draw_{\emph{len}}}$ and $d \in \mathbb{N}$. \\ [0.5ex]  \hline
$r$ & The ticket redemption period measured in lottery rounds, such that $r = \frac{d_{\emph{redeem}}}{draw_{\emph{len}}}$ and $r \in \mathbb{N}$. \\ [0.5ex]  \hline
$y_i$ & Number of duplicated tickets in a lottery round $i$, such that $0 \leq y_i \leq (1-p)\tau$. \\ [0.5ex]  \hline
$m$ & Number of beneficiary merchants, such that $m \in \mathbb{N}$. \\ [0.5ex]  \hline
$p$ & Lottery winning probability, such that $0 \leq p \leq 1$. \\ [0.5ex]  \hline
$\beta$ & Currency value of a winning ticket, such that $\beta \in \mathbb{R}^+$. \\ [0.5ex]  \hline
$k$ & The escrow lifetime measured in lottery rounds, such that $k = \frac{l_{\emph{esc}}}{draw_{\emph{len}}}$ and $k \in \mathbb{N}$. \\ [0.5ex]  \hline

\end{tabular}}
\end{table}

Table~\ref{notations} summarizes the notations we use in 
this section, including shorter 
abbreviations than those used earlier in this paper to 
simplify presentation. \\

\noindent{\bf Utility gain function definition.} We define the utility gain function of 
any customer as the service value minus the payments made to 
merchants. We compute the expected value of this function for an 
honest customer and for a malicious one that 
uses the ticket duplication strategy. In order to deter cheating, we 
require the latter to be less than or equal to the former. This is achieved 
by setting the penalty deposit to be at least equal to the maximum   
additional expected utility gained by cheating. \\

\noindent{\bf Additional utility gain analysis.}
We now state and prove a lower bound for \bpen based on the 
above game setup.  

\begin{theorem}
For the game and escrow setup described above, 
issuing invalid or duplicated lottery tickets is less profitable   
in expectation than acting in an honest way if: 
\begin{equation}\label{penalty-bound}
B_{\emph{penalty}} > (m-1)p\beta\tau \bigg(\frac{1-p}{1-\frac{1}{_{\tau}C_{(1-p)\tau}}} + (1-p)(d-1)+r \bigg)
\end{equation}

\end{theorem}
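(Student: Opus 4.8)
The plan is to compare, in expectation, the utility of an honest customer with that of a malicious one running the ticket-duplication strategy, and to show that forcing \bpen above the displayed quantity makes cheating strictly worse. The implicit exchange rate in the model is that a ticket, worth $p\beta$ in expectation, buys $p\beta$ of service, so an honest customer issuing $\tau$ tickets per lottery round has per-round expected utility $\tau p\beta - p\tau\beta = 0$; this is the baseline. For the malicious customer I would take the strategy already isolated in the game setup: each lottery round it uses all $\tau$ in-range sequence numbers and duplicates as many of them as it dares to all $m$ merchants. Two facts fix ``as many as it dares'': duplicating a sequence number among fewer than $m$ merchants cannot reduce the catch probability (all copies win or lose together), and under the fixed-rate lottery duplicating more than $(1-p)\tau$ numbers in a round makes a winning duplicate certain, so in its ``safe'' regime the customer duplicates exactly $(1-p)\tau$ per round.

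Next I would pin down when detection occurs. Modelling each draw's winning set as a uniformly random $p\tau$-subset of the $\tau$ sequence numbers (justified by the random-oracle and uniform-block assumptions together with the collision-discarding draw), a safe round is ``clean''---no duplicated winner---exactly when the $(1-p)\tau$ losers are precisely the duplicated set, which has probability $1/\binom{\tau}{(1-p)\tau} = 1/\rho$. Hence the index $J$ of the first ``dirty'' round is geometric with success probability $1-1/\rho$, so $\mathbb{E}[J] = \frac{1}{1-1/\rho}$; and because that draw is only revealed $d$ lottery rounds later, and the winning duplicate is in the worst case redeemed a further $r$ rounds after that, the penalty is burned at round $J+d+r$. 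This splits the cheating career into rounds $1,\dots,J+d-1$ at duplication level $(1-p)\tau$ (the bad news has not yet surfaced), followed by rounds $J+d,\dots,J+d+r-1$, during which the now-doomed customer duplicates all $\tau$ numbers.

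I would then total the additional expected utility phase by phase, always rounding in the customer's favour. A clean round at level $(1-p)\tau$ nets $(m-1)(1-p)p\tau\beta$ over the baseline (service on $p\tau + m(1-p)\tau$ tickets at rate $p\beta$, against a payment of only $p\tau\beta$ since all winners are non-duplicated in a clean round); bounding every dirty round the same way, and discarding the extra $(m-1)\beta$ the customer must pay for each duplicated winner---which only lowers its utility---the level-$(1-p)\tau$ phase contributes at most $(m-1)p\beta\tau(1-p)\big(\mathbb{E}[J]+d-1\big)$. An all-out round supplies service on $m\tau$ tickets, i.e.\ $(m-1)\tau p\beta$ above the baseline once the larger payment is again discarded, so the final phase contributes at most $(m-1)p\beta\tau\, r$. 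The strategy is caught almost surely (surviving all $k$ draws has probability $\rho^{-k}$, negligible and absorbed by the accumulated slack), so \bpen is burned with probability essentially $1$; adding the phase contributions, substituting $\mathbb{E}[J]=\frac{1}{1-1/\rho}$, and subtracting the lost deposit gives a malicious expected utility of at most $(m-1)p\beta\tau\big(\frac{1-p}{1-1/\rho}+(1-p)(d-1)+r\big) - B_{\emph{penalty}}$, which the displayed strict inequality forces below the honest value $0$. Finally, invalid out-of-range tickets are rejected and reported on receipt, yielding no extra service before the penalty is burned, hence strictly worse than honesty and a fortiori covered by the same bound.

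The step I expect to be the crux is justifying that duplicating exactly $(1-p)\tau$ sequence numbers per safe round is genuinely the customer's best play rather than just an assumption: a smaller level $y$ survives a round with the larger probability $s(y) := \binom{\tau-y}{(1-p)\tau-y}/\binom{\tau}{(1-p)\tau}$ but earns only $(m-1)yp\beta$ per clean round, so the $y$-dependent part of the per-career gain is $(m-1)p\beta\big(\frac{y}{1-s(y)} + y(d-1)\big)$, and one must show this is maximized at $y=(1-p)\tau$; the first summand reduces to a monotonicity estimate for ratios of binomial coefficients, while the second is trivially increasing. The remaining subtleties---that going all-out is the optimal endgame, the exact off-by-one in the ``$d-1$'' bookkeeping of when a draw becomes visible, and the escrow-lifetime boundary---I expect to be routine, and in any case each falls on the conservative side of the inequality.
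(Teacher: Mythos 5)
Your proposal is correct in substance and arrives at exactly the paper's bound, but by a genuinely different route. The paper sets up a backward recursion over the escrow lifetime: it expresses $\mathbb{E}_k[u(\mathcal{\hat{C}})]$ in terms of $\mathbb{E}_{k-1}[u(\mathcal{\hat{C}})]$, drops the continuation term using the inductive hypothesis $\mathbb{E}_{k-1}[u(\mathcal{\hat{C}})]<0$, solves the resulting inequality for \bpen as a function of the per-round duplication levels $(y_1,\dots,y_d)$, and only then maximizes at $y_i=(1-p)\tau$ (justified by a numerical sweep over $p$ and $\tau$, not analytically). You instead fix the level $(1-p)\tau$ up front, model the first round containing a duplicated winner as a geometric variable $J$ with success probability $1-\rho^{-1}$, and sum expected gains directly: $\mathbb{E}[J]+d-1$ rounds at additional gain $(m-1)p\beta\tau(1-p)$ plus $r$ all-out rounds, minus the burned deposit. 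The accountings are equivalent (your $J+d-1$ is the recursion's $J-1$ survive-branch rounds plus the $d$ rounds charged in the caught branch), and both arguments share the same unproven crux: that $y=(1-p)\tau$ maximizes $y/(1-s(y))$, which the paper only verifies numerically and you defer to a binomial-ratio monotonicity estimate; note also that your fixed-level analysis covers adaptive, round-varying duplication only through that maximization step, which the paper's recursion absorbs more directly. One step to tighten: ``the deposit is burned with probability essentially 1'' is neither needed nor always true (short escrows or small $\tau$). The clean fix is that capping the career at the escrow lifetime $k$ gives $\mathbb{E}[\min(J,k)]=\frac{1-\rho^{-k}}{1-\rho^{-1}}$ while the deposit is burned with probability exactly $1-\rho^{-k}$, so the factor $1-\rho^{-k}$ multiplies both the gain and the penalty and cancels, and the stated bound suffices with no ``negligible'' approximation. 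What your route buys is a self-contained closed-form expectation with no induction over shorter escrows; what the paper's recursion buys is a more direct handling of round-by-round choices $y_i$.
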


\begin{proof}
In \sysname, a customer can create an 
escrow with a $k$ round lifetime. All tickets issued in a round enter the lottery 
after $d$ rounds, and all winning tickets will expire after $r$ rounds from the 
lottery draw time. In other words, for each round $i \in \{1, \dots, k\}$, all tickets issued in 
round $i$ will enter the lottery at round $i + d$ and will expire at round $i+d+r$.

During each round of an escrow lifetime, an honest  
customer can issue up to $\tau$ tickets with unique 
sequence numbers. Each ticket has an expected value of 
$p\beta$ coins, which corresponds 
to the service value a customer obtains from a merchant for handing out this ticket.  
We use this service value in computing the utility gain function, and 
hence, deriving a lower bound for \bpen.

In contrast, when applying the duplicated ticket issuance strategy, for each round 
$i \in \{1, \dots, k\}$ a malicious customer would 
decide to duplicate $y_i$ tickets, where $y_i \in \{1, \dots, (1-p)\tau\}$. If none of the  
duplicated tickets win, which happens with probability 
$\frac{_{(1-p)\tau}C_{y_i}}{_{\tau}C_{y_i}}$,\footnote{The expression $_aC_b$ 
stands for ${a \choose b}$.} the customer 
stays in the system and obtains an additional utility gain of $(m-1)p\beta y_i$ 
over what an honest customer would receive. This occurs because an honest customer will use a  
sequence number with one merchant only. The malicious customer uses a sequence 
number with $m$ merchants, and hence, it obtains 
service from an additional $m-1$ merchants by duplicating a sequence number.

On the other hand, if any of the duplicated tickets wins the lottery at round $i + d$, 
which happens with probability 
$1 - \frac{_{(1-p)\tau}C_{y_i}}{_{\tau}C_{y_i}}$, 
the malicious customer will be detected at round $i+d+r$ (the latest). This 
reduces its additional utility by \bpen since the penalty escrow will be 
revoked by the miners.

Note that when a duplicated ticket wins, meaning that cheating will be detected,  
the malicious customer will still have $r$ rounds to issue tickets. Therefore, 
as a rational behavior, this customer will choose to duplicate 
all tickets in these rounds because it must leave the system either way.

In order to compute the additional utility gain, we need to model the 
duplication decisions a malicious customer would make at each round 
of an escrow lifetime. We use a 
decision process diagram that captures a process evolution over time. 
Such a diagram contains states indicating the rounds, transition probabilities 
between these states computed based on the decisions made at each 
state, and the additional utility of being at each state, or round, in 
the system.

\begin{figure}[t!]
\centerline{
\includegraphics[height= 1.4in, width = 1.0\columnwidth]{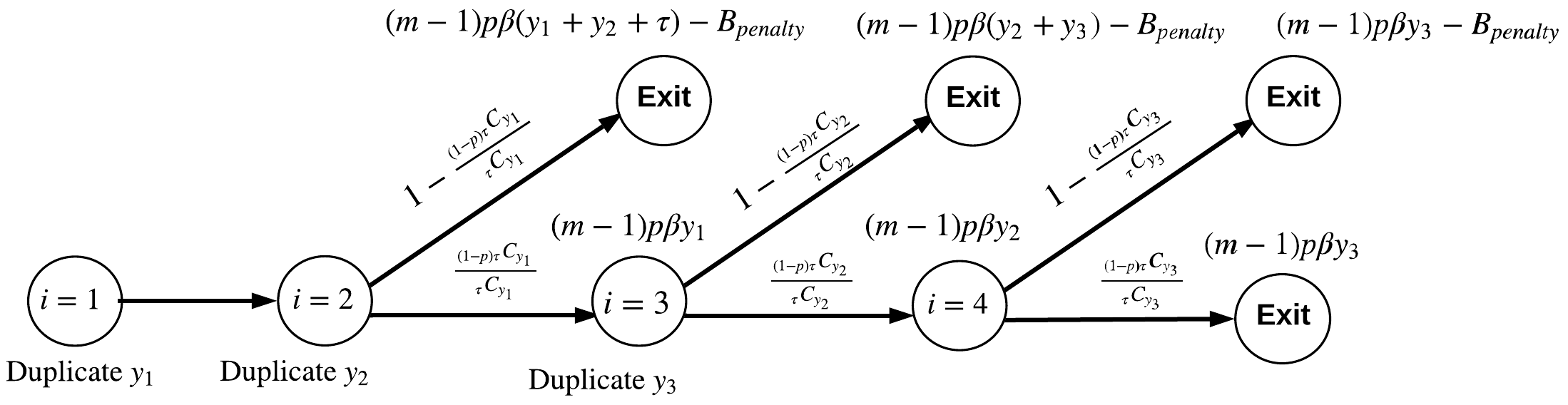}}
\vspace{-7pt}
\caption{Decision process for a $3$ round escrow with $d=2$ rounds and $r=1$ round. Arrows 
carry probabilities, decisions are found below the states, and 
the utility gain is found above the states.} \label{penalty-decision-d2-z1}
\end{figure}

To clarify this concept, we consider a simple case where we have an 
escrow with 3 round lifetime, $d = 2$ rounds, 
and $r = 1$ round. The decision process for this setup is captured in   
Figure~\ref{penalty-decision-d2-z1}. As shown, a customer issues tickets 
for rounds 1 and 2 before any lottery draw takes place, in which  
it duplicates $y_1$ and $y_2$ tickets, respectively. All tickets issued during  
round 1 enter the lottery at the beginning of round 3 (or immediately 
after the end of round 2 as depicted in the figure). If none of these tickets win, the 
malicious customer obtains an additional utility gain of $(m-1)p\beta y_1$ and 
proceeds to round 3. For this round, the customer decides to duplicate 
$y_3$ tickets. On the other hand, 
if any of the $y_1$ tickets wins, the customer knows that it will be detected at 
the end of round 3 (since $r = 1$). Thus, it decides to duplicate all tickets 
in round 3 (i.e., $y_3 = \tau$). The total additional utility the customer obtains in this case, which 
is displayed above the exit state since this customer will leave the system,  
is the sum of the utility gain of duplicating $y_1$, $y_2$, and $y_3$ 
tickets, where $y_3 = \tau$, minus the penalty deposit that will be revoked.

The same analogy is applied to the rest of the rounds, with the exception that at the 
very last rounds there are less than $r$ rounds that can be used at the exit state. 
In other words, the number of remaining rounds 
in the escrow lifetime could be less than $r$, and hence, a customer 
will duplicate fewer than $r\tau$ tickets. This is illustrated in the 
exit state after round $i = 3$ in Figure~\ref{penalty-decision-d2-z1}.

Instead of analyzing a decision process for a $k$ round escrow 
directly, we formulate the 
expected utility gain of a malicious customer in a recursive way. 
That is, we use the expected utility gain in a $k-1$ round escrow 
to compute the expected utility gain in a $k$ round 
escrow, and so on. Intuitively, during the first round of a $k$ 
round escrow, a malicious customer will decide to duplicate $y_1$ 
tickets. If any of these tickets wins at round $1 + d$, cheating will 
be detected. In this case, and as mentioned earlier, the customer 
will duplicate all tickets for the next $r$ rounds and will pay the 
penalty \bpen. This means that with probability $1 - \frac{_{(1-p)\tau}C_{y_i}}{_{\tau}C_{y_1}}$, the 
utility gain is $(m-1)p\beta\big(\sum_{i=1}^d y_1 + r\tau\big) - B_{\emph{penalty}}$.

If none of the $y_1$ tickets 
wins the lottery, the customer stays in the system. In this case,  
round 2 offers a fresh start in a $k-1$ round 
escrow. That is, after collecting the utility gain of duplicating 
$y_1$ tickets, the customer is just like starting fresh in a one round shorter  
escrow. This means that with probability $\frac{_{(1-p)\tau}C_{y_1}}{_{\tau}C_{y_1}}$, the 
utility gain will be $(m-1)p\beta y_1 + \mathbb{E}_{k-1}[u(\mathcal{\hat{C}})]$, 
where the second term denotes the expected utility gain of a malicious customer 
in a $k-1$ round escrow.

Based on the above, we can express $\mathbb{E}_k[u(\mathcal{\hat{C}})]$, 
which is the quantity of interest, as follows:
\begin{multline}
\mathbb{E}_k[u(\mathcal{\hat{C}})] = \bigg(1-\frac{_{(1-p)\tau}C_{y_1}}{_{\tau}C_{y_1}} \bigg)\bigg((m-1)p\beta \sum_{i=1}^d y_i + (m-1)p\beta r\tau - B_{\emph{penalty}}\bigg) + \\ \bigg(\frac{_{(1-p)\tau}C_{y_1}}{_{\tau}C_{y_1}} \bigg)\bigg( (m-1)p\beta y_1 + \mathbb{E}_{k-1}[u(\mathcal{\hat{C}})] \bigg)
\end{multline}

But we have $\mathbb{E}_{k-1}[u(\mathcal{\hat{C}})] < 0$ since the penalty for 
a $k-1$ round escrow has been configured in a way that makes 
$\mathbb{E}_{k-1}[u(\mathcal{\hat{C}})] < 0$ in order to deter cheating. Hence, and 
by requiring $\mathbb{E}_k[u(\mathcal{\hat{C}})] \leq 0$ 
to make cheating unprofitable, we find that: 
\begin{equation}\label{temp}
B_{\emph{penalty}}(y_1, \dots, y_d) > (m-1)p\beta \Bigg(\frac{y_1}{1-\frac{_{(1-p)\tau}C_{y_1}}{_{\tau}C_{y_1}}} + \sum_{i=2}^d y_i + r\tau \Bigg)
\end{equation}

For any $d$ and $r$ value, the above quantity is maximized when  
$y_i = (1-p)\tau$ for $i \in \{1, \dots, d\}$.\footnote{This is done by considering the terms in 
equation~\ref{temp}. For the first term, we used a simple iterative program  
to compute the value of $y_1$ that maximizes this term for $p \in [0.000001, 0.999999]$, 
with an increment step of $0.000001$,    
and $\tau \leq 10^6$. We found that $y_1 = (1-p)\tau$ for all these $p$ values. The second  
term is maximized when $y_i$ is set to its maximum value, which is $(1-p)\tau$, for all 
$i \in \{2, \dots, d\}$.} Substituting these in equation~\ref{temp} produces the lower 
bound stated in the theorem, which completes the proof.\footnote{Equation~\ref{penalty-bound} in 
Theorem 1 is 
reported in Section~\ref{design} as equation~\ref{penalty} after converting the lottery rounds 
into simple rounds using the original notations found in that section.} \qed
\end{proof}

As an example, consider an escrow with a 200 round lifetime, 
$\tau = 1000$ tickets, $p = 0.01$, $\beta = 1$ coin, $m=5$, $d = 6$, 
$r=6$, and \drawlen = 1. Applying equation~\ref{payment-balance} 
produces $B_{\emph{escrow}} = 2,000$ 
coins, and applying equation~\ref{penalty-bound} 
produces $B_{\emph{penalty}} > 477.6$ coins.

\section{Security Analysis}
\label{security-analysis}
\vspace{-8pt}
In this section, we analyze the resilience of \sysname to the threats outlined in 
Section~\ref{threat-model}. To defend against these threats, our scheme  
utilizes cryptographic and financial techniques based on the threat type to 
be addressed. This is discussed in the following paragraphs. \\

\noindent{\bf Escrow overdraft.} 
This threat can be exploited using several strategies, including: 
\begin{itemize}
\itemsep0em
\item A customer creates a payment escrow with a balance that 
cannot cover all winning tickets, or a penalty escrow with a balance that 
cannot cover the financial punishment.
\item A customer issues more tickets than the \tktrate specified in 
its escrow setup.
\item A customer performs a front running attack in which it 
withdraws the payment escrow before paying merchants, or 
withdraws the penalty escrow before paying the 
financial punishment when caught cheating. 
\end{itemize}

The first strategy, in which a customer creates 
payment and penalty escrows with insufficient balances, is neutralized 
by the escrow setup of \sysname. When processing an escrow 
creation transaction, the miners check that the payment escrow balance 
covers all winning tickets (see  
Section~\ref{escrow-setup}). In addition, they check that the penalty 
deposit meets the lower bound derived in Section~\ref{penalty-analysis}. 
The miners will reject any escrow that does not satisfy these conditions.

The second strategy, i.e., issuing more tickets than can be 
covered by the escrow,
cannot be performed because lottery tickets are tracked using their 
sequence numbers. When receiving a ticket, a merchant checks that a sequence number to be 
within the range assigned to a ticket 
issue round before accepting the ticket. Hence, if a customer exceeds \tktrate in 
any round, merchants will detect that immediately. Merchants will reject any 
ticket outside of the current round (modulo one round to deal with 
desynchronized views of the blockchain).

As for the last strategy that covers front running attacks, such 
attacks are mitigated by the 
escrow spending mechanism and the lottery protocol implemented in 
\sysname. A customer does not control any of the escrows it owns. Instead, 
fund release is triggered only by the receipt of a valid winning lottery ticket, 
in the case of a payment escrow, or a valid proof-of-cheating, in the case of 
a penalty escrow. Honest miners will enforce these rules in the system.\\

\noindent{\bf Duplicate Ticket Issuance.} 
\sysname addresses this attack financially through a detect-and-punish 
approach. Any party that detects two or more tickets issued using the same escrow
and carrying identical 
sequence numbers can produce a proof-of-cheating 
against the issuing customer. Miners publish this proof on the blockchain, which 
burns the customer’s penalty escrow as a punishment. Setting the penalty deposit 
as described in Appendix~\ref{penalty-analysis} makes cheating unprofitable, 
which deters rational customers from attempting this malicious strategy. \\

\noindent{\bf Invalid payments.}
To pursue this attack, a customer may issue tickets with invalid 
format or invalid field values knowing that these tickets cannot be 
claimed later if they win. An invalid ticket is one that uses  
an invalid escrow (e.g., a broken one), has  
an invalid $(t_{\emph{issue}}, seqno)$ tuple, or contains a 
merchant index that is not listed in the escrow state. These events can be detected 
instantly because merchants validate all lottery tickets before 
accepting them. As mentioned previously, an invalid ticket is  
dropped unless it has an invalid $(t_{\emph{issue}}, seqno)$ tuple. 
As such a ticket can be used as a proof-of-cheating, the customer  
loses the penalty escrow, which exceeds any  
profit from cheating. This discourages rational customers from issuing   
invalid tickets. \\

\noindent{\bf Unused-escrow withholding.}
This threat is mitigated by the expiration rule of lottery tickets and 
\sysname's escrow refund policy. Each ticket has a set redemption 
period after which it expires. Hence, a   
merchant who tries to put an escrow on hold by delaying a winning ticket 
claim will lose its payment. Furthermore, when all 
tickets tied to an escrow expire, i.e., when \trefund is approached, the 
miners will allow the customer to spend the remaining funds 
in its escrow. This prevents locking 
unused escrow funds indefinitely on the blockchain. \\

\noindent{\bf DoS.} As DoS is a large threat category, in this 
work we consider only the cases that are unique to the design of  
\sysname. These include preventing customers from creating 
escrows, preventing 
merchants from claiming their winning lottery tickets, or controlling   
relaying blocks and transactions based on their content. Such  
situations may happen when miners disregard specific 
transactions or blocks, or 
when an attacker controls the network links and drops specific 
transactions or blocks.

The case of miners disregarding specific transactions/blocks may take place 
when an attacker controls a substantial portion of the mining 
power. This may work even under the assumption that the 
majority of the mining power is honest. That is, an attacker 
with $< 50\%$ of the mining power may still 
be able to perform harmful attacks, e.g., selfish mining~\cite{Sapirshtein16}.  
To protect against selective relaying, techniques that allow propagating blocks 
and transactions without disclosing  their content  
can be employed, e.g., BloXroute~\cite{bloxroute}.

The case of controlling the network links, which represents 
attacks against network availability, 
is a potential problem in any distributed system. 
Deploying mechanisms to enhance network  
connectivity, such as participants maintaining connections with large 
number of miners, may reduce the impact of this attack. Such 
mechanisms are independent of the design of \sysname, and so it is 
up to the parties themselves to adopt suitable solutions. \\

\noindent{\bf Lottery manipulation.} 
This threat covers all strategies that could be used to manipulate the 
lottery draw, including:
\begin{itemize}
\itemsep0em
\item An attacker, who could be any party inside or 
outside the system, tries to influence the 
hash used in a lottery draw in order to 
make specific tickets win or lose. 

\item A malicious customer may issue winning lottery tickets, to itself or 
to malicious colluding merchants, to drain the escrow 
before other merchants can claim their winning tickets.

\item A malicious customer deliberately issues  
losing tickets to merchants to avoid paying them.

\item An attacker, insider or outsider, tries to issue lottery 
tickets to herself or others.
\end{itemize}

In the first strategy, an attacker tries to influence the lottery by 
controlling the hash used in the lottery draw. This can be done by 
either manipulating the ticket fields that impact the lottery, or by 
controlling the hash of the block mined at time \tdraw. In the former, 
the issuing customer may tweak a ticket in order to influence its 
chance of winning the lottery. In the latter, a 
miner may forgo any block that does not produce a favorable 
lottery outcome (i.e., a favorable VDF value), or even publish an 
incorrect VDF value in order to bias the lottery draw outcome.

All these malicious behaviors are mitigated by the lottery protocol design. 
The ticket fields that impact the lottery draw include only the ones that 
appear in the escrow state, and these cannot be tweaked by the issuing 
customer, or by any other party. As for discarding unfavorable blocks, recall that 
the lottery draw is based on the VDF value of the block at index \tdraw.  
Therefore, a miner who  
chooses to perform this computation and then announce a favorable block, 
is not likely to succeed in publishing this block on the 
blockchain. Computing the VDF takes time, and in the meantime 
other miners will announce their blocks immediately. 
These blocks are more likely to be adopted by the network, and hence, to be 
used in the lottery draw (under the 
assumption that the majority of the mining power is honest). Furthermore, 
publishing an invalid VDF value will not succeed because honest miners, 
who already computed this value while mining their blocks, will 
reject a newly received block with a VDF value that does not 
match their own. Consequently, 
such a malicious miner will not only fail in biasing the lottery, but also will 
lose the mining rewards.

In the second strategy, a malicious customer tries to withdraw an escrow 
indirectly by issuing winning tickets to itself, and claiming  
these tickets before merchants can claim their payments. To do so, a 
customer may wait until the block at index \tdraw is mined and then 
print winning tickets to itself. This technique is mitigated by the lottery protocol design. 
As the ticket issue schedule specifies both issue and lottery draw time for each round, 
it will be too late for the customer to select winning sequence numbers after \tdraw. 
After observing the block at time \tdraw, this customer cannot do anything 
to produce winning tickets other than checking which sequence numbers 
are winning (assuming it did not use these sequence numbers to pay 
any merchant earlier). As mentioned previously, this customer cannot manipulate the ticket fields to 
make a losing sequence number win. Also, it cannot issue losing 
tickets using these sequence numbers to merchants because the 
issue time of these tickets was at least \ddraw rounds earlier, and hence, merchants 
will not accept these tickets that lost the lottery. On the other hand, should a customer who saves 
some sequence numbers and uses them to issue tickets to itself have winning tickets, it will 
not affect the payments of other merchants. Such sequence 
numbers represent legitimate tickets and are therefore are covered by the 
payment escrow balance.

The third strategy, where a customer deliberately hands merchants 
losing tickets, will fail with overwhelming probability. In order to determine 
which ticket will lose the lottery, the customer 
needs to either predict the hash of the future block mined at time \tdraw in order to compute 
its VDF value, or take a guess at the VDF value itself. Since hash functions are 
modeled as random oracles, predicting such values succeeds with negligible probability.

Lastly, the fourth strategy, in which an attacker tries to issue tickets tied 
to escrows she does not own, will not succeed due to the system's use of 
secure cryptographic signatures. \sysname requires customers to sign 
all lottery tickets they issue, which means that to issue a valid ticket an attacker 
needs to forge the customer’s signature. By using a secure digital signature 
scheme such an attack will fail with overwhelming probability.

\section{Performance Evaluation}
\label{perf-eval}
In order to understand the performance benefit of concurrent probabilistic 
micropayments, this section evaluates the computation, bandwidth, and 
payment setup costs of \sysname. To do this, we conduct 
empirical experiments to answer the following questions:
\begin{itemize}
\itemsep0em
\item How fast can customers, merchants, and miners 
process lottery tickets? 

\item What is the bandwidth cost of exchanging these 
tickets?

\item What is the size of escrows on the blockchain?

\item How do the schemes compare using workload numbers 
derived from real world scenarios?
\end{itemize}

To put our results in context, we compare our scheme with 
MICROPAY~\cite{Pass15}. The rest of this section describes our 
methodology and discusses the significance of the obtained results.

\subsection{Methodology}
To establish our benchmarks, we implemented the functions 
used for generating tickets, verifying these ticket, and 
performing a lottery draw. For \sysname, we followed the design 
introduced in this paper. For MICROPAY, we tested    
its fully decentralized version, called MICROPAY1, 
with its non-interactive lottery protocol as outlined in~\cite{Pass15}. This protocol 
requires a merchant to publish the description of a verifiable unpredictable 
function to perform the lottery. For this function, we used the 
verifiable random function (VRF) construction introduced by Goldberg 
et al.~\cite{Goldberg16} with its implementation over the NIST P-256 
curve~\cite{ecvrf}.

Two cryptographic primitives affect the implementation of both \sysname 
and MICROPAY, 
namely, hash functions and digital signatures. For hashing, we used SHA256. 
For digital signatures, we chose to test the most common elliptic curve based 
constructions. These include ECDSA with secp256k1 curve (used in Bitcoin and most 
cryptocurrencies), ECDSA with P-256 curve (widely used and  
recommended by NIST), 
and EdDSA with Ed25519 curve~\cite{Bernstein12} (of a great 
interest recently as its security and efficiency have encouraged several major cryptocurrencies 
to either use this scheme~\cite{stellar-eddsa,monero-eddsa} or prepare to switch to  
it~\cite{ripple-eddsa,ethereum-eddsa}).

We computed the performance metrics of interest as follows. 
The computation 
cost was measured as the rate at which customers, merchants, and miners can 
process lottery tickets. Bandwidth overhead was calculated by 
reporting on the size of tickets (in bytes) when exchanged between customers 
and merchants, and when claimed through the miners. To evaluate the 
effect of micropayment concurrency, we computed the number of 
escrows a customer would need  
to support the ticket issue rate in each of the tested schemes. 
Lastly, we studied two real life applications 
and computed the overhead of processing micropayments using 
workload numbers derived from these applications.

Our experiments were implemented in C on an Intel Core i7-4600U 
CPU @ 2.1 GHz, with 4 MB cache and 8 GB RAM, where each of the payment 
processing functions was called $10^6$ times.

\subsection{Microbenchmark Results}
\vspace{-2pt}
\subsubsection{Lottery ticket processing rate} 
\label{processing-rate}
We start by quantifying the 
computation cost of processing micropayments in both schemes. 
This is done by measuring the rate 
at which a customer can generate lottery tickets, the rate 
at which a merchant can process these tickets, which involves 
both validating a ticket and running the lottery\footnote{Although 
a merchant in \sysname runs the lottery several rounds after  
validating a ticket, we report the cost of these two operations together. 
This is because such cost is the overhead per ticket incurred on the 
merchant side.}, and the rate at which miners validate claimed tickets and 
running the lottery for the escrows. 
The obtained results are found in Table~\ref{tkt-rate}.

\begin{table}[t!]
\caption{Ticket processing rate (ticket / sec).} 
\label{tkt-rate}
\centering \scriptsize{
\begin{tabular}{| p{0.12\columnwidth}  | p{0.14\columnwidth} | p{0.12\columnwidth} | p{0.14\columnwidth}| p{0.14\columnwidth} | p{0.12\columnwidth} | p{0.14\columnwidth}|}\hline\hline
&\multicolumn{3}{|c|}{\bf MICROPAY} &\multicolumn{3}{|c|}{\bf \sysname} \\ [0.5ex] \hline
& {\bf ECDSA (secp256k1)} & {\bf ECDSA (P-256)} & {\bf EdDSA (Ed25519)} & {\bf ECDSA (secp256k1)} & {\bf ECDSA (P-256)} & {\bf EdDSA (Ed25519)} \\ [0.5ex] \hline\hline
 
Customer & 1,859 & 32,471 & 26,238 & 1,868 & 33,006 & 26,749 \\[0.5ex] \hline

Merchant & 1,328 &  2,399 & 2,561 & 2,249 & 10,505 & 8,473 \\ [0.5ex]  \hline   

Miner & 1,340 & 2,448 & 2,617 & 2,241 & 10,345 & 8,368 \\ [0.5ex]  \hline    

\end{tabular}}
\vspace{-10 pt}
\end{table}

As the table shows, customers in both schemes 
generate tickets at comparable rates   
because the operations performed  
are almost identical in \sysname and MICROPAY. Given that the 
heaviest operation in this process is signing a ticket, the generation rates 
improve by using an efficient digital signature scheme (when 
replacing ECDSA (secp256k1) 
with ECDSA (P-256) and EdDSA (Ed25519), performance 
is boosted by around 17x and 14x, respectively).

The trend is different for the rates of merchants and miners. These parties in \sysname 
are 1.7x, 4.2x, and 3.2x faster than in MICROPAY for the three digital 
signature schemes. This is because miners and merchants  
run and verify the lottery draw outcome. In \sysname,  
this process involves only lightweight hash operations. 
On the other hand, the lottery in MICROPAY requires evaluating, and 
proving the output  correctness, of a computationally-heavy VRF.

Furthermore, merchants and miners in \sysname benefit more from 
the efficiency of the used digital signature scheme. This is because the heaviest 
operation these parties perform when processing a ticket in \sysname is verifying 
a customer's signature. However, in MICROPAY the bottleneck 
is evaluating a VRF and producing a proof that its output is correct on the merchant side, and verifying 
this proof on the miner side. As shown in the table, 
MICROPAY obtains only around 1.9x improvement when replacing ECDSA (secp256k1) 
with any of the other two schemes. In contrast, \sysname achieves around 4.7x and 3.8x 
improvement when replacing ECDSA (secp256k1) with ECDSA (P-256) or EdDSA (Ed25519),
respectively.  

{\bf Key Takeaway:} Compared to MICROPAY, \sysname reduces the computational 
load on merchants and miners by a factor of 1.7-4.2x.

\subsubsection{Lottery ticket bandwidth cost} 
\label{bw-cost} 
In terms of bandwidth, 
\sysname incurs less overhead than MICROPAY because its lottery tickets 
are smaller. A ticket sent from a customer to a merchant is 110 bytes in  
\sysname, while it is 274 bytes in MICROPAY. 
A winning ticket sent from merchants 
to miners is also 110 bytes in \sysname, while it is 355 bytes for MICROPAY  
because this ticket must be accompanied with a  
NIZK proof. This means that \sysname incurs only 40$\%$ of 
the bandwidth overhead 
of MICROPAY between customers and merchants, and only 31$\%$ of the 
overhead between merchants and miners.

To put these numbers in context, we report on the transaction sizes in Bitcoin. 
The average size is around 500 bytes, where a transaction with one or two inputs 
is about 250 bytes~\cite{bitcoin-tx-size}. Adding a winning ticket as one of 
these inputs produces a 
claim transaction with a size of 360 bytes in \sysname, which is less than the 
average Bitcoin transaction size. On the other hand, in MICROPAY the size of a claim 
transaction will be 605 bytes, exceeding the average size. 

{\bf Key Takeaway:} The use of efficient lottery protocol reduces not only 
the computation cost in \sysname, but also the bandwidth cost of exchanging 
lottery tickets and the amount of data to be logged on the blockchain.

\subsubsection{Size of escrows on the blockchain}
\label{escrow-size}
One major difference between concurrent and sequential micropayment schemes
is that the former need a new escrow after each winning ticket.
Additionally, to issue tickets in parallel at a fast rate in sequential schemes, 
the customer needs a large number of escrows. This is because a new ticket cannot 
be issued using the same escrow until it is confirmed that the prior ticket 
did not win, which requires the customer 
to wait for the merchant to announce the lottery result. Hence, even if the customer is 
capable of generating tickets at a fast rate, it might slow down just because 
it does not have enough escrows to allow this rate. Furthermore, even if the 
customer is willing to create larger number of escrows, this dramatically increases the 
overhead. Each of these escrows requires an individual 
escrow creation transaction, which in turn requires paying a 
transaction fee and logging on the blockchain.

For example, to support the ticket issue rates reported in 
Table~\ref{tkt-rate}, a MICROPAY customer would need a large number of escrows. 
The exact number of escrows needed depends on the network 
latency and a merchant's ticket processing rate. Using the average US RTT of 
31 ms~\cite{att}, and the processing time of the tickets, 
in the best case an escrow in MICROPAY can be used to issue  
32 tickets per second (this is in case none of these ticket win or only the 
last one wins). Therefore, a customer in MICROPAY would need 
60, 1019, or 653 escrows per second to support the generation rates for signature schemes  
ECDSA (secp256k1), ECDSA (P-256), or EdDSA (Ed25519), respectively, 
as found in Table~\ref{tkt-rate}. On the other hand, a customer in \sysname 
would need only \emph{one} escrow with the proper balance to pay at any 
given ticket rate. As such, \sysname dramatically reduces the 
amount of data logged on the blockchain.

{\bf Key Takeaway:} Supporting micropayment concurrency  
dramatically reduces the amount of escrow data on the blockchain.

\subsection{Micropayments in Real World Applications}
\label{micro-real-world}
\vspace{-4pt}
To ground our results in real world numbers, we examined two 
micropayment applications: online gaming and peer-assisted 
content delivery networks (CDNs). We computed the overhead of 
processing micropayments with parameter values derived based on the service 
price and workload in these applications. This is done for three cases: 
Bitcoin without employing any micropayment scheme, Bitcoin 
with MICROPAY, and Bitcoin with \sysname.

\vspace{-8pt}
\subsubsection{Setup}
To compute the overhead, we estimated the service costs and loads from real
world sources.  For online gaming, we used data from the 
popular game Minecraft~\cite{minecraft} as 
an example. The average mid tier cost of playing this game for 8 players 
is around \$12 per 
month~\cite{minecraft-price}. We considered 1000 players distributed among 
125 servers. This means that the service price is \$0.034722 
per minute (or \$0.000579 per second) for the 1000 players.

For the peer-assisted CDN application, we considered a content publisher 
that hires peers as caches to distribute the content for its clients.  
Suppose a publisher wants to serve content at roughly 1Gpbs. 
Such a service costs around \$17,312 monthly in the 
US~\cite{cdn-price}, and hence, on average, the service cost per second 
is \$0.006679. The publisher will provide a lottery ticket to a cache
for each 1MB data chunk it serves. Thus, to support a rate of 1Gbps,  the 
publisher will issue 128 tickets per second.

With these combined values, it is possible to compute the lottery 
winning probability $p$ and the currency value of a winning ticket 
$\beta$.  The former is done by determining the total transaction fee 
to be paid for the miners (per second), and then computing $p$ in a way that ensures 
the fees paid when claiming winning tickets (per second)  
do not exceed this value. We 
consider the fee to be 2\% of the service cost paid per 
second~\cite{additonal-fee} (at a minimum)\footnote{In both examples, we 
assume that the players and the publisher will pay at the 
same price offered by a gaming or CDN company.}. For the fee of 
redeeming a winning ticket, we use the median transaction 
fee in Bitcoin, which is 
around \$0.068 as of late January 2019~\cite{bitcoin-tx-fee}.

Based on the above, the fees per second are equal to 
the expected number of winning tickets per second multiplied by the 
transaction fee that the miners charge. So in Minecraft, $p$ can be computed as 
$p = \frac{(0.02)(0.000579)}{(16.67)(0.068)} = 0.00001$, where 16.67 is 
the number of tickets issued by all players per second (the 1000 players 
issue 1000 tickets per minute). Similarly, a publisher in 
our CDN example can compute $p$ as 
$p = \frac{(0.02)(0.006679)}{(128)(0.068)} = 0.000015$.

As for computing $\beta$, it can be estimated by dividing the service 
cost by the number of winning tickets (both per second). In Minecraft, 
this produces $\beta = \$3.472$, and it is \$3.4 in the CDN application.

For the escrow setup, in Minecraft, we assume that each player creates 
one escrow per month. For the CDN application, we consider a publisher 
creating one escrow per day. In addition to that, in 
MICROPAY a new escrow must be created after each winning ticket\footnote{To 
simplify the discussion in this section, and since the goal is to evaluate the  
financial and bandwidth cost of the payment setup rather than its computational 
cost, we allow \sysname 
to operate with a non-integer number of winning tickets per round. This is done 
by adopting the lottery protocol with independent ticket winning events described in 
Appendix~\ref{lottery-protocol-ind}.}.

It should be noted that in both the gaming and CDN examples, we only account for the cost 
of operating the service. Factors such as funding  
Minecraft's development team (which was presumably supported by the initial 
purchase of the game) or produce the video content that was served by the 
CDN are not considered here. In practice, operational costs are minimal 
compared to the development cost which 
can run in the hundreds of millions of dollars~\cite{game-dev-cost}.

We used this setup to compute overhead of micropayment 
processing for both applications, as shown in what follows.

\vspace{-8pt}
\subsubsection{Online Gaming Application}
We used the configuration parameters outlined earlier 
to compute the transaction fees and the bandwidth cost of micropayment 
processing in this application. The results are found in Table~\ref{results-game}.

\begin{table}[t!]
\caption{Micropayment overhead in online gaming (a round is 10 min).} 
\label{results-game}
\centering \small{
\begin{tabular}{| p{0.4\columnwidth}  !{\vrule width 2pt} p{0.19\columnwidth} | p{0.19\columnwidth} | p{0.19\columnwidth} |}\hline
Metric  & {\bf Bitcoin} & {\bf MICROPAY} & {\bf \sysname}  \\ [0.5ex] \hline\hline

Winning tickets / sec  & N/A  & 0.000167 & 0.000167    \\[0.5ex] \hline

Escrows / sec & N/A  & 0.000552 & 0.000386   \\ [0.5ex]  \hline   

Transactions /sec  & 16.67  & 0.000719 & 0.000552   \\ [0.5ex]  \hline   

Transaction fees / round & \$680  & \$0.029341 & \$0.022541   \\ [0.5ex]  \hline  

Bandwidth between customers and miners & 3,333 bps  & 1.105 bps & 1.009 bps  \\ [0.5ex]  \hline   
 
Bandwidth between customers and merchants & N/A  & 36,533 bps & 14,667 bps  \\ [0.5ex]  \hline   

Bandwidth between merchants and miners & N/A  & 0.807 bps & 0.523 bps  \\ [0.5ex]  \hline

Delta blockchain size / round & 2.38 MB  & 0.000137 MB & 0.00011 MB  \\ [0.5ex]  \hline

\end{tabular}}
\vspace{-7 pt}
\end{table}

We start with the number of transactions 
the miners process. In Bitcoin, all micropayments are processed as
individual transactions. In contrast, with MICROPAY and \sysname, 
only winning tickets and escrow creation transactions will be sent to the 
miners. MICROPAY is a sequential scheme, so every time a 
ticket wins the escrow
breaks. Therefore, the number of escrows per round equals to the expected number of 
winning tickets per round. 
In \sysname, however, a player may use one escrow for the duration of their 
subscription (a month in our case). Consequently, 
and as the table shows, \sysname generates 
25\% fewer transactions, accounting both escrow and winning ticket redemption.

The number of transactions affects the amount of fees miners charge for processing. 
As the table shows, processing micropayments 
individually is expensive, costing \$680 per round (a transaction 
costs around \$0.068 as mentioned previously). However, in \sysname and 
MICROPAY, the fees are much lower because they are paid only when 
claiming winning tickets or creating 
escrows. Furthermore, due to the reduced
number of escrows, \sysname incurs the 
least fees, costing around 75\% of what MICROPAY incurs.

In calculating the bandwidth overhead, in Bitcoin players send all micropayments directly to the 
miners. They do not send anything to the servers nor do the servers send anything to the 
miners. On the other hand, in MICROPAY and \sysname, all lottery tickets are exchanged 
locally between 
players and servers. Only escrows and winning tickets will be sent to the miners. 
Based on the average size of a Bitcoin transaction (around 250 bytes as mentioned earlier), 
the size of an escrow transaction in MICROPAY is around 250 bytes, wheares  
in \sysname it is around 327 bytes. This is because our scheme adds additional fields 
to store the payment setup parameters as described in 
Section~\ref{escrow-setup}. For the size of a claim transaction, beside the 
transaction average size, we 
add the size of a winning ticket (110 bytes in MicroCash and 355 
bytes in MICROPAY). As shown in Table~\ref{results-game}, 
the bandwidth cost between players/servers and the miners in Bitcoin is more than 3000x 
the cost incurred in MICROPAY or \sysname. This shows the great benefit of processing 
payments locally using a micropayment scheme.

The bandwidth cost of the miners can be used to quantify the 
increase in the blockchain size per round since all transactions sent to the miners are 
logged on the blockchain. As the table shows, logging all micropayments is prohibitive as it 
requires more than 2 MB per round. In Bitcoin, only one block of size 1MB 
can be published per round, meaning that paying at this relatively slow rate cannot be supported. 
On the other hand, this overhead is reduced to less than 0.00014 MB in MICROPAY 
and \sysname.

\vspace{-8pt}
\subsubsection{Peer-assisted CDN Application}
Because of the larger workload involved, our results show micropayment 
schemes offer even more dramatic benefits when used to serve CDN traffic.
As Table~\ref{results-cdn} shows, in plain Bitcoin, miners process 128 
transactions per second, which is the number of data chunks caches serve 
per second. This number drops to fractions in MICROPAY, and goes further 
down an 50\% in \sysname. 
This is reflected in both the transaction fees and the blockchain 
size.

\begin{table}[t!]
\caption{Micropayment overhead in peer-assisted CDNs (a round is 10 min).} 
\label{results-cdn}
\centering \small{
\begin{tabular}{| p{0.4\columnwidth}  !{\vrule width 2pt} p{0.19\columnwidth} | p{0.19\columnwidth} | p{0.19\columnwidth} |}\hline
Metric  & {\bf Bitcoin} & {\bf MICROPAY} & {\bf \sysname} \\ [0.5ex] \hline\hline

Winning tickets / sec  & N/A  & 0.001964 & 0.001964   \\[0.5ex] \hline

Escrows / sec & N/A & 0.001976 & 0.000012  \\ [0.5ex]  \hline   

Transactions /sec  & 128 & 0.00394 & 0.001976  \\ [0.5ex]  \hline   

Transaction fees / round & \$5,222 & \$0.160769 & \$0.08062  \\ [0.5ex]  \hline  

Bandwidth between customers and miners & 256,000 bps & 3.95 bps & 0.165 bps  \\ [0.5ex]  \hline   
 
Bandwidth between customers and merchants & N/A & 280,576 bps & 112,640 bps  \\ [0.5ex]  \hline   

Bandwidth between merchants and miners & N/A & 9.508 bps & 6.16 bps  \\ [0.5ex]  \hline

Delta blockchain size / round & 18.31 MB & 0.000963 MB  & 0.000452 MB  \\ [0.5ex]  \hline

\end{tabular}}
\vspace{-7 pt}
\end{table}

Processing micropayments 
individually costs more than \$5,000 per round, while these fees drop to 
cents when a micropayment scheme is employed. It also 
requires logging more than 18 MB per round on the blockchain. 
This overhead is reduced to around 0.001 MB in MICROPAY and 
0.0005 in \sysname. This shows the great advantage of employing a micropayment 
scheme for heavy loaded applications, and the benefit of 
supporting micropayment concurrency (where \sysname reduced the 
additional blockchain size and the total fees by around 50\%).

In terms of bandwidth overhead between participants, in plain Bitcoin the 
miners have at least 19,000x the cost when a micropayment scheme is 
employed. Moreover, \sysname incurs almost no bandwidth cost between the 
publisher and the miners. This is despite the fact that in this application, an 
escrow creation transaction in \sysname is larger than the one needed in 
MICROPAY (such a transaction is around 1,783 bytes in \sysname, where we 
consider 45 beneficiary caches to support the rate of 1Gbps\footnote{We use the
 average upload speed in the US~\cite{upload-speed}, which is 
22.79 Mbps. Hence, to serve 1Gbps, a publisher needs 45 caches.}). 
Such a minimal cost for \sysname is due to payment concurrency as it 
allows creating a one long-lifetime escrow instead of 
large number of escrows, as MICROPAY requires.  Even for ticket 
exchange, \sysname incurs a lower cost although both schemes have the 
same number of tickets. 
This is because a ticket (on-chain or off-chain) in \sysname is substantially
smaller than in MICROPAY.

{\bf Key Takeaways:} Micropayments are absolutely critical to be able to process 
small transactions
in modern applications.  \sysname is cost efficient enough to be used in 
online gaming and content distribution.  Concurrent use of a single escrow 
decreases the total data added to the blockchain by roughly half.

\section{Conclusions}
\label{conclusions}
In this paper, we introduce \sysname, the first 
decentralized probabilistic framework that supports concurrent 
micropayments. The design of \sysname features an  
escrow setup and ticket tracking mechanism that permit a customer 
to rapidly issue tickets in parallel using only \emph{one} escrow. 
Moreover, \sysname is cost effective as it implements a non-interactive 
lottery protocol for micropayment aggregation that is based solely on 
secure hashing. When compared to the sequential scheme MICROPAY, 
\sysname has substantially higher payment processing rates and much lower 
bandwidth and on-chain traffic.
This demonstrates the viability of employing our scheme in large-scale 
micropayment applications.

{\footnotesize \bibliographystyle{acm}
\bibliography{microBib}}

\normalsize
\begin{appendices}
\renewcommand{\thesection}{\Alph{section}}

\section{Lottery Protocol With Independent Ticket Winning Events}
\label{lottery-protocol-ind}
In this section, we describe a variant of our lottery protocol that follows the 
same paradigm found in previous 
works~\cite{Wheeler96,Rivest97,Pass15,Chiesa17}. That is, each lottery ticket 
enters the lottery independently of other tickets. Thus, there is a 
chance that all tickets in a round may win or all lose with an expected number 
of winning tickets of $p$\tktrate per round. This protocol can be used in applications 
where it is infeasible to set \drawlen, $p$, and \tktrate in a way that produces 
an integer number of winning tickets per \drawlen rounds, such as the real life 
applications discussed in Section~\ref{perf-eval}. In 
this section, we first introduce this protocol variant, after which we derive 
bounds for the payment and penalty escrow balances, denoted as 
\bescind and \bpenind, respectively, under the modified protocol setup.

\subsection{Lottery Protocol Design}
The lottery protocol with independent ticket winning events is similar to the 
one introduced in Section~\ref{lottery-protocol} in the sense that it is based 
solely on secure hashing, requires the miners to compute a VDF value that 
is used in the lottery draw, and defines parameters \ddraw and 
\dredeem to control when a ticket can enter the lottery, and when 
it can be redeemed if it wins.

However, this protocol differs from one with an exact win rate 
in three aspects. First, given that each ticket 
enters the lottery independently of others, there is no restriction that $p$\tktrate 
be an integer value. Hence, there is no need for the parameter 
\drawlen. This also affects how the lower bound of 
\bpenind is computed, as we will see in Section~\ref{penalty-analysis-ind}. Second, 
given that there is a chance that more 
tickets than expected (i.e., $>p$\tktrate\lesc) may win, we require that 
the payment escrow cover all winning tickets with very high probability, at least 
$1 - \epsilon$ for some small $\epsilon$ value such as $\epsilon < 0.05$. 
This affects how the value of \bescind is computed as we will see in 
Section~\ref{b-escrow-ind}.  And third, we need a different lottery draw mechanism 
to allow tickets to enter independently. 
We elaborate on this mechanism in what follows.

As before, a merchant keeps a ticket $tkt_L$ until its  
lottery draw time \tdraw, which is exactly \ddraw rounds after its issue time. Then, after 
observing the VDF value of the block 
mined at that time, the merchant computes the following quantity over the  
ticket (where $H$ is a hash function):
\begin{equation}\label{tkt-hash}
h_{tkt_L} = H(id_{\emph{esc}}||seqno||VDF(Block_{t_{\emph{draw}}}))
\end{equation}

A ticket wins if the least significant word of $h_{tkt_L}$ is less than 
$2^{32}p$. Therefore, within the precision allowed by a 32-bit number, a ticket 
will have a $p$ chance of winning.

This process is clarified by the example depicted in 
Figure~\ref{lottery-example-old}. As shown, the ticket has been issued at round 30, and hence, 
it entered the lottery at round 40. The VDF value of the block with index 
40 appears inside block 41. Based on the value of $h_{tkt_L}$, the ticket in the 
figure is a winning one.

\begin{figure}[t!]
\centerline{
\includegraphics[height= 1.6in, width = 0.9\columnwidth]{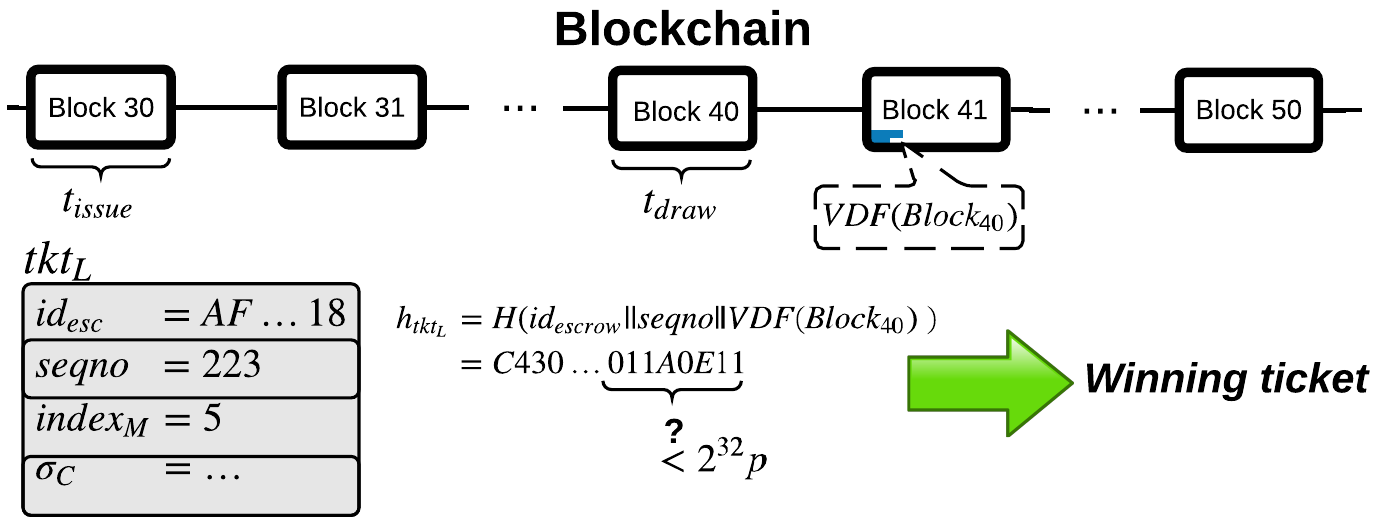}}
\vspace{-5pt}
\caption{Lottery draw example (\ddraw = 10, and $p = 0.01$). } 
\label{lottery-example-old}
\end{figure}

As before, $h_{tkt_L}$ involves only the ticket 
fields that are part of the escrow state that the issuing customer cannot manipulate. 
These fields do not include the merchant recipient address, which 
means that a ticket's chance of winning the lottery is not affected by 
who owns it. Hence, this protocol variant has the same security properties as  
the protocol with an exact ticket win rate.

\subsection{Computing \bescind}
\label{b-escrow-ind}
In this section, we show how to compute the payment escrow 
balance \bescind for the lottery protocol with independent ticket 
winning events that satisfies the $1 - \epsilon$ coverage rule. This computation 
is done using a probabilistic analysis that relies on modeling 
the payment process in \sysname under this protocol. In what follows, we state 
and prove a formula to calculate this balance.

\begin{theorem}
For an escrow with lifetime \lesc rounds, ticket issue rate 
\tktrate, lottery winning probability $p$, winning ticket currency value $\beta$, 
and parameter $\epsilon$, where $l_{esc}, tkt_{rate} \in \mathbb{N}$, 
$\beta \in \mathbb{R}^+$, and $0 \leq p, \epsilon \leq 1$, the value of 
\bescind needed to cover all 
winning lottery tickets with probability at least $1-\epsilon$ under \sysname setup 
with independent ticket winning events is given by:
\begin{equation}\label{payment-balance-ind}
B_{\emph{escrow}}^{\emph{ind}} = \beta F^{-1}(p,  l_{\emph{esc}} tkt_{\emph{rate}}, 1-\epsilon)
\end{equation}   
\noindent where $F^{-1}$ is the inverse 
of the cumulative distribution function (CDF) of the binomial 
distribution parameterized by $p$ and a number of trials 
$t =  l_{\emph{esc}} tkt_{\emph{rate}}$ at the value $1-\epsilon$. 
\end{theorem}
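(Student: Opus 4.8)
The plan is to reduce the escrow-sizing question to a quantile computation for a binomial distribution. First I would count the total number of legitimate tickets an escrow can back: over $l_{\emph{esc}}$ rounds at $tkt_{\emph{rate}}$ tickets per round, an honest customer issues exactly $t = l_{\emph{esc}}\, tkt_{\emph{rate}}$ tickets, each carrying a distinct $seqno$ within its round's assigned range. I would then invoke the lottery-draw rule of equation~\ref{tkt-hash}: a ticket wins iff the least significant word of $h_{tkt_L} = H(id_{\emph{esc}}\|seqno\|VDF(Block_{t_{\emph{draw}}}))$ is below $2^{32}p$.

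Second, I would argue that the win indicators of these $t$ tickets are mutually independent Bernoulli$(p)$ random variables. Independence follows because the hash inputs are pairwise distinct --- within a round the $seqno$ fields differ, and across rounds the block (hence its $VDF$ value) differs --- so under the random-oracle model the outputs $h_{tkt_L}$ are independent and uniform; modeling block hashes (and the $VDF$ values derived from them) as uniformly distributed, a stated assumption, then makes each ticket's win probability equal to $\lfloor 2^{32}p\rfloor/2^{32}$, which I treat as $p$ within 32-bit precision. Consequently the number of winning tickets $W$ over the escrow lifetime is distributed as $\mathrm{Binomial}(t,p)$.

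Third, covering all winning tickets means the payment escrow must hold at least $\beta W$ coins; since $B_{\emph{escrow}}^{\emph{ind}}$ is fixed at escrow-creation time, the coverage event is $\{W \le B_{\emph{escrow}}^{\emph{ind}}/\beta\}$, which we want to occur with probability at least $1-\epsilon$. Writing $F(\cdot)$ for the CDF of $\mathrm{Binomial}(t,p)$ and $F^{-1}(p,t,1-\epsilon)$ for its generalized $(1-\epsilon)$-quantile (the least integer $n$ with $F(n)\ge 1-\epsilon$), setting $B_{\emph{escrow}}^{\emph{ind}} = \beta\, F^{-1}(p,t,1-\epsilon)$ yields $\Pr[W \le B_{\emph{escrow}}^{\emph{ind}}/\beta] = F\!\left(F^{-1}(p,t,1-\epsilon)\right) \ge 1-\epsilon$, which is the claimed coverage guarantee; minimality of the quantile also shows this is the smallest integer multiple of $\beta$ that works, so the formula is tight.

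The main obstacle I expect is the independence-and-uniformity step rather than the arithmetic: I must verify that the lottery inputs across all rounds of the escrow lifetime are genuinely distinct and that conditioning on the realized blockchain --- which fixes the relevant $VDF$ values --- does not secretly correlate the tickets sharing a draw round; it does not, precisely because distinct $seqno$ values query the random oracle at distinct points. A secondary subtlety is the discreteness of $F^{-1}$: since the binomial CDF jumps, $F(F^{-1}(p,t,1-\epsilon))$ is in general strictly larger than $1-\epsilon$, so the conclusion is coverage with probability \emph{at least} $1-\epsilon$, matching the statement, and the proof should flag this rounding rather than assert equality.
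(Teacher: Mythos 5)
Your proposal is correct and follows essentially the same route as the paper's proof: model each ticket's win as an independent Bernoulli($p$) trial under the random-oracle/uniform-block-hash assumptions, conclude that the number of winning tickets is Binomial($l_{\emph{esc}}\,tkt_{\emph{rate}}, p$), and set the escrow to $\beta$ times the $(1-\epsilon)$-quantile of that distribution. Your added care about why the oracle queries are distinct (distinct $seqno$ and draw blocks) and about the discreteness of the quantile giving ``at least $1-\epsilon$'' rather than equality is a welcome refinement, but it does not change the argument.
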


\begin{proof}
Given that we work in the random oracle model, and that 
we model the block hashes on the blockchain as a 
uniform distribution, lottery winning events are independent and can be 
modeled as Bernoulli trials. This means that the total number of 
winning tickets tied to an escrow with lifetime \lesc rounds, ticket issue rate 
\tktrate, and lottery winning probability $p$, is a binomial random variable 
parameterized by $p$ and a number of trials 
$t =  l_{\emph{esc}} tkt_{\emph{rate}}$.

Requiring \bescind to cover all winning tickets with probability $1-\epsilon$ 
means that \bescind must contain sufficient currency to pay a number of 
winning tickets $\psi$ that hits the $(1-\epsilon)$th percentile of the above binomial  
distribution, i.e., \bescind = $\psi\beta$. This number can be computed 
as:
\begin{equation}
\psi =  F^{-1}(p,  l_{\emph{esc}} tkt_{\emph{rate}}, 1-\epsilon)
\end{equation}

\noindent where $F^{-1}$ is the inverse 
of the cumulative distribution function (CDF) of the binomial 
distribution parameterized by $p$ and a number of trials 
$t =  l_{\emph{esc}} tkt_{\emph{rate}}$ at the value $1-\epsilon$. 
Substituting this expression in \bescind = $\psi\beta$ produces the formula 
stated in the theorem above, which completes the proof. \qed
\end{proof}

\subsection{Computing a Lower Bound for \bpenind}
\label{penalty-analysis-ind}
\vspace{-8pt}
In this section, we compute a lower bound for the penalty  
deposit \bpenind required to deter cheating under the lottery 
protocol with independent 
ticket winning events. This is done using a similar approach to the 
one used in Appendix~\ref{penalty-analysis}, as it applies the 
same utility function definition and covered malicious strategies. 
In what follows, we present this analysis, which includes stating and proving a 
lower bound for \bpenind. \\

\noindent{\bf Game setup.}
Similar to before, we have a single player game in which a malicious customer 
applies the ticket duplication strategy. However, we do not have the concept of fat 
rounds, where a round in this protocol is the exact time needed a single block on 
the blockchain. This means that the cheating detection period is \ddraw + \dredeem 
rounds. Furthermore, given that tickets enter the lottery independently of each 
other, there is a chance that none of the tickets will win. Thus, duplicating all tickets in 
a round is an option for a rational customer. For this reason, in each round of this 
game, a malicious customer
may decide to duplicate $y_i$ sequence numbers, such that 
$y_i \in \{1, \dots, tkt_{\emph{rate}} \}$.

Table~\ref{notations-2} summarizes the new notations we use in 
this section, beside redefining a few that appeared in Table~\ref{notations}, 
to describe this game. \\

\begin{table}[t!]
\caption{Notations II.} 
\label{notations-2}
\centering 
\small{
\begin{tabular}{| p{0.1\columnwidth} | p{0.82\columnwidth} |}\hline\hline
{\bf Symbol} & {\bf Meaning}  \\ [0.5ex] \hline\hline

$\kappa$ & The number of tickets that can be issued per round, such that $\tau  =$\tktrate and $\tau \in \mathbb{N}$. \\ [0.5ex]  \hline
$w$ & The lottery draw period in rounds, such that $w =$ \ddraw and $d \in \mathbb{N}$. \\ [0.5ex]  \hline
$x$ & The ticket redemption period in rounds, such that $x=$\dredeem and $x \in \mathbb{N}$. \\ [0.5ex]  \hline
$y_i$ & Number of duplicated tickets in round $i$, such that $0 \leq y_i \leq \kappa$. \\ [0.5ex]  \hline
$\upsilon$ & The escrow lifetime in rounds, such that $\upsilon=$\lesc and $\upsilon \in \mathbb{N}$. \\ [0.5ex]  \hline

\end{tabular}}
\vspace{-10 pt}
\end{table}

\noindent{\bf Additional utility gain analysis.}
We now state and prove a lower bound for \bpenind based on the 
above game setup.

\begin{theorem}
For the game and escrow setup described above, 
issuing invalid or duplicated lottery tickets is less profitable   
in expectation than acting in an honest way if: 
\begin{equation}\label{penalty-bound-ind}
B_{\emph{penalty}}^{\emph{ind}} > (m-1)p\beta\kappa \bigg(\frac{1}{1-(1-p)^{\kappa}} + w+x-1 \bigg)
\end{equation}

\end{theorem}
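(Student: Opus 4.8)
The plan is to mirror the recursive decision-process argument behind Theorem~1 (equation~\ref{penalty-bound}), replacing the fixed-rate lottery by the independent-Bernoulli one. First I would fix the single-player game over a $\upsilon$-round escrow: in round~$1$ a malicious customer commits to duplicating $y_1\in\{1,\dots,\kappa\}$ sequence numbers, each handed to all $m$ merchants. The one genuinely new ingredient is the win probability of these duplicates. Since in this protocol each sequence number wins independently with probability $p$ (the least significant word of $h_{tkt_L}$ landing below $2^{32}p$), and all $m$ copies of a sequence number share the same hash $h_{tkt_L}$ and hence win or lose together, the probability that none of the $y_1$ duplicated numbers wins is exactly $(1-p)^{y_1}$. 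This replaces the hypergeometric ratio $\binom{(1-p)\tau}{y_1}/\binom{\tau}{y_1}$ of the fixed-rate analysis, and it is precisely why $y_i$ may now range all the way up to $\kappa$: even duplicating every ticket in a round leaves a nonzero escape probability $(1-p)^{\kappa}$.

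Next I would assemble the recursion for $\mathbb{E}_{\upsilon}[u(\mathcal{\hat{C}})]$. With probability $(1-p)^{y_1}$ none of round~$1$'s duplicates wins; the customer banks $(m-1)p\beta y_1$ and faces a fresh $(\upsilon-1)$-round game. With probability $1-(1-p)^{y_1}$ at least one wins, its draw resolving $w$ rounds later, so detection occurs by round $1+w+x$; in the meantime the customer issues under its committed choices over rounds $1,\dots,w$ and, once it learns it is caught, maximally duplicates ($\kappa$ per round) over the remaining $x$ rounds before its penalty deposit is burned. This yields
\begin{multline*}
\mathbb{E}_{\upsilon}[u(\mathcal{\hat{C}})] = \Big(1-(1-p)^{y_1}\Big)\Big((m-1)p\beta\big(\textstyle\sum_{i=1}^{w}y_i + x\kappa\big) - B_{\emph{penalty}}^{\emph{ind}}\Big) \\ + (1-p)^{y_1}\Big((m-1)p\beta y_1 + \mathbb{E}_{\upsilon-1}[u(\mathcal{\hat{C}})]\Big).
\end{multline*}
Using $\mathbb{E}_{\upsilon-1}[u(\mathcal{\hat{C}})] < 0$ (the inductive hypothesis that the deposit already deters cheating on shorter escrows, with the one-round escrow as the conservatively-covered base case) and imposing $\mathbb{E}_{\upsilon}[u(\mathcal{\hat{C}})] \le 0$, I would solve for the deposit and regroup the $y_1$-terms to obtain the per-strategy bound
\begin{equation*}
B_{\emph{penalty}}^{\emph{ind}} > (m-1)p\beta\Big(\frac{y_1}{1-(1-p)^{y_1}} + \textstyle\sum_{i=2}^{w}y_i + x\kappa\Big).
\end{equation*}

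Finally I would maximize the right-hand side over $(y_1,\dots,y_w)\in\{1,\dots,\kappa\}^w$. The tail $\sum_{i=2}^{w}y_i$ is clearly maximized at $y_i=\kappa$; for the first term I would show $f(y):=y/(1-(1-p)^y)$ is nondecreasing on $[1,\kappa]$ by observing that the numerator of $f'(y)$ equals $g(y):=1-(1-p)^y\big(1-y\ln(1-p)\big)$, that $g(0)=0$, and that $g'(y)=y(1-p)^y\big(\ln(1-p)\big)^2\ge 0$, so $g\ge 0$ and hence $f'\ge 0$. Thus the first term too is maximized at $y_1=\kappa$. Substituting $y_1=\dots=y_w=\kappa$ and factoring out $\kappa$ merges the $w-1$ remaining committed rounds and the $x$ maximal-duplication rounds into the factor $w+x-1$, which is exactly equation~\ref{penalty-bound-ind}, completing the proof.

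I expect the main obstacle to be the round bookkeeping in the ``caught'' branch: justifying that detection by round $1+w+x$ leaves precisely $w$ committed rounds (hence $w-1$ beyond round~$1$) and $x$ maximal-duplication rounds, since the interaction of the VDF delay with round boundaries makes an off-by-one easy, and any such slip would propagate straight into the additive constant in the bound. The monotonicity of $f$ is the only other nontrivial step, and in contrast to the fixed-rate case — which resorted to a numerical scan over $p$ and $\tau$ — here it can be settled cleanly in closed form.
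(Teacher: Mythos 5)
Your proposal is correct and follows essentially the same route as the paper's proof: the same single-player decision-process recursion for $\mathbb{E}_{\upsilon}[u(\mathcal{\hat{C}})]$ with the binomial escape probability $(1-p)^{y_1}$, the same inductive use of $\mathbb{E}_{\upsilon-1}[u(\mathcal{\hat{C}})]<0$, the same per-strategy bound, and the same maximization at $y_1=\dots=y_w=\kappa$ yielding the factor $w+x-1$. The one place you go beyond the paper is the maximization of $y/(1-(1-p)^{y})$: the paper defers to the numerical-scan argument of the fixed-rate case, whereas your closed-form monotonicity proof via $g(y)=1-(1-p)^{y}\bigl(1-y\ln(1-p)\bigr)$, $g(0)=0$, $g'(y)\ge 0$ is a clean analytic replacement that actually strengthens that step.
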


\begin{proof}
In \sysname, a customer can create an 
escrow with an $\upsilon$ round lifetime. All tickets issued in a round enter the lottery 
after $w$ rounds, and all winning tickets will expire after $x$ rounds from their  
lottery draw time. 

During each round of an escrow lifetime, an honest  
customer can issue up to $\kappa$ tickets with unique 
sequence numbers. As before, each ticket has an expected value of 
$p\beta$ coins, which corresponds 
to the service value a customer obtains for handing out this ticket.

On the other hand, for each round 
$i \in \{1, \dots, \upsilon\}$, a malicious customer would 
decide to duplicate $y_i$ tickets, where $y_i \in \{1, \dots, \kappa\}$. If none of the  
duplicated tickets win, which happens with probability $(1-p)^{y_i}$, this customer 
stays in the system and obtains an additional utility gain of $(m-1)p\beta y_i$ 
over what an honest customer would obtain. On the other hand, if 
any of these tickets win the lottery at round $i + w$, which happens with probability 
$1 - (1-p)^{y_i}$, the customer will be detected at round $i+w+x$ (the latest). This 
reduces its additional utility by \bpenind since the miners will revoke its penalty escrow.
At this time, the malicious customer still has $x$ rounds to issue tickets from the time 
of learning that it will be caught. Therefore, this customer will choose to duplicate 
all tickets in these rounds as described before.

Similar to Section~\ref{penalty-analysis}, we use a 
decision process diagram that captures the decisions a 
malicious customer makes over an escrow lifetime. 
As an example, we consider a simple case where we have an 
escrow with a 3 round lifetime, $w = 2$ rounds, 
and $x = 1$ round. The decision process for this setup is captured in   
Figure~\ref{penalty-decision-d2-z1-ind}.

\begin{figure}[t!]
\centerline{
\includegraphics[height= 1.4in, width = 1.0\columnwidth]{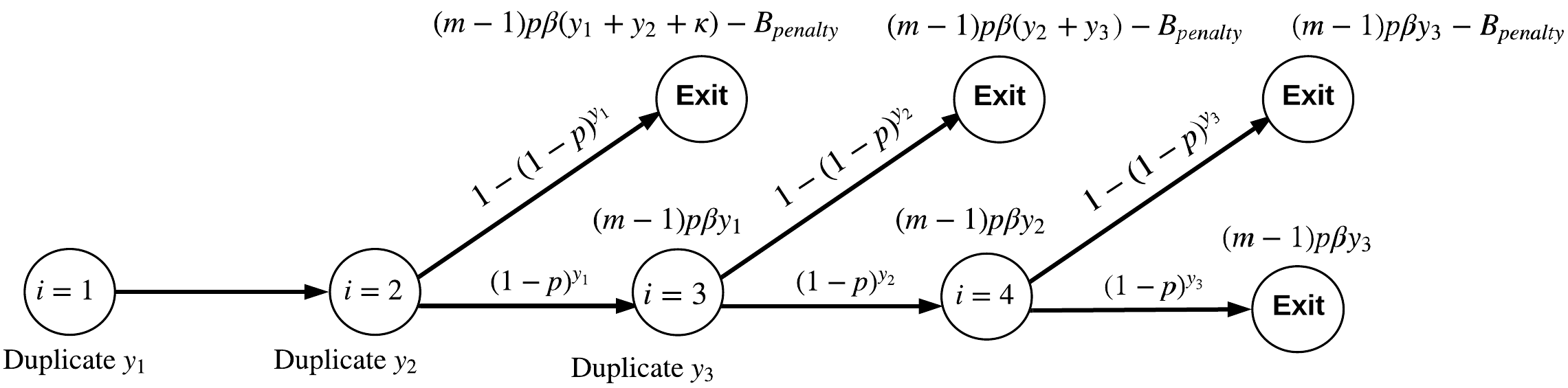}}
\vspace{-7pt}
\caption{Decision process for a $3$ round escrow with $w=2$ rounds and $x=1$ round. Arrows 
carry probabilities, decisions are found below the states, and 
the utility gain is found above the states.} \label{penalty-decision-d2-z1-ind}
\end{figure}

As shown, the only difference from what we saw in Figure~\ref{penalty-decision-d2-z1}, 
beside dealing with normal rounds, is the cheating detection probability. So, 
this process follows the same logic for deriving the additional utility at each round 
based on the likelihood that the cheating will be caught.

We utilize the same recursive idea in analyzing the above decision process. 
During the first round of an $\upsilon$ 
round escrow, a malicious customer will decide to duplicate $y_1$ 
tickets. If any of these tickets wins at round $1 + w$, cheating will 
be detected. Thus, the customer 
will duplicate all tickets for the next $x$ rounds and will pay the 
penalty \bpenind. This means that with probability $1 - (1-p)^{y_1}$, the 
utility gain is 
$(m-1)p\beta\big(\sum_{i=1}^w y_i + x\kappa\big) - B_{\emph{penalty}}^{\emph{ind}}$.

If none of the duplicated $y_1$ tickets wins the lottery, the customer stays in the system. 
This means that with probability $(1-p)^{y_1}$, the utility gain of this customer will be 
$(m-1)p\beta y_1 + \mathbb{E}_{\upsilon-1}[u(\mathcal{\hat{C}})]$, 
where the 
second term denotes the expected utility gain of a malicious customer 
in an $\upsilon-1$ round escrow.

Based on the above, we can express $\mathbb{E}_{\upsilon}[u(\mathcal{\hat{C}})]$  
as follows:
\begin{multline}
\mathbb{E}_\upsilon[u(\mathcal{\hat{C}})] = \big(1-(1-p)^{y_1} \big)\bigg((m-1)p\beta \sum_{i=1}^w y_i + (m-1)p\beta x\kappa - B_{\emph{penalty}}^{\emph{ind}}\bigg) + \\ (1-p)^{y_1}\bigg( (m-1)p\beta y_1 + \mathbb{E}_{\upsilon-1}[u(\mathcal{\hat{C}})] \bigg)
\end{multline}

As before, we have $\mathbb{E}_{\upsilon-1}[u(\mathcal{\hat{C}})] < 0$ since the penalty for 
an $\upsilon-1$ round escrow has been configured in a way that makes 
cheating unprofitable. Hence, and 
by requiring $\mathbb{E}_{\upsilon}[u(\mathcal{\hat{C}})] < 0$ 
to deter cheating, we find that: 
\begin{equation}\label{temp-2}
B_{\emph{penalty}}(y_1, \dots, y_d) > (m-1)p\beta \bigg(\frac{y_1}{1-(1-p)^{y_1}} + \sum_{i=2}^w y_i + x\kappa \bigg)
\end{equation}

For any $w$ and $x$ value, the above quantity is maximized when  
$y_i = \kappa$ for $i \in \{1, \dots, w\}$.\footnote{This is done in a similar way to the one found in 
Section~\ref{penalty-analysis}.} Substituting this in equation~\ref{temp-2} produces the lower 
bound stated in the theorem, which completes the proof. \qed
\end{proof}

As an example, consider an escrow with a 200 round lifetime, 
$\tau = 1000$ tickets, $p = 0.01$, $\beta = 1$ coin, $m=5$, $d = 6$, 
$r=6$, and $\epsilon = 0.01$. Applying equation~\ref{payment-balance-ind} 
produces $B_{\emph{escrow}} = 2,104$ 
coins, and applying equation~\ref{penalty-bound-ind} 
produces $B_{\emph{penalty}} > 480$ coins. Comparing these values to the ones 
obtained for the same example in Section~\ref{penalty-analysis}, where $B_{\emph{escrow}} = 2,000$ 
coins and $B_{\emph{penalty}} > 477.6$, 
we realize that these values re very close. This is because this example uses a long lifetime 
escrow and a larger ticket issue rate, which makes these values as close as possible to the 
expected values. Consequently, for short escrow lifetime and lower ticket issue rate, the 
difference will more dramatic. That is, the bounds for a lottery protocol with an exact 
win rate will be much lower, reducing the collateral cost for the customer.

\end{appendices} 

\end{document}